\let\amalg=\undefined
\let\coprod=\undefined
\DeclareSymbolFont{cmsymbols}{OMS}{cmsy}{m}{n}
\DeclareSymbolFont{cmlargesymbols}{OMX}{cmex}{m}{n}
\DeclareMathSymbol{\amalg}{\mathbin}{cmsymbols}{"71}
\DeclareMathSymbol{\coprod}{\mathop}{cmlargesymbols}{"60}
\title{Dependent Inductive and Coinductive Types
  are Fibrational Dialgebras}
\author{Henning Basold
\institute{Radboud University, iCIS, Intelligent Systems}
\institute{CWI, Amsterdam, The Netherlands}
\email{h.basold@cs.ru.nl}
}
\begin{document}
\maketitle
% \let\thefootnote\relax
% \footnotetext{Base revision~\gitAbbrevHash, \gitReferences~from~\gitAuthorDate}

\begin{abstract}
  In this paper, I establish the categorical structure necessary to interpret
  dependent inductive and coinductive types.
  It is well-known that dependent type theories à la Martin-Löf can be
  interpreted using fibrations.
  Modern theorem provers, however, are based on more sophisticated type systems
  that allow the definition of powerful inductive dependent types
  (known as inductive families) and, somewhat limited, coinductive dependent
  types.
  I define a class of functors on fibrations and show how data
  type definitions correspond to initial and final dialgebras for these
  functors.
  This description is also a proposal of how coinductive types
  should be treated in type theories, as they appear here simply as dual of
  inductive types.
  Finally, I show how dependent data types correspond to algebras and
  coalgebras, and give the correspondence to dependent polynomial functors.
\end{abstract}

\section{Introduction}
\label{sec:intro}

It is a well-established fact that the semantics of inductive data types without
term dependencies can be given by initial algebras, whereas the semantics of
coinductive types can be given by final coalgebras.
However, for types that depend on terms, the situation is not as clear-cut.

Partial answers for inductive types can be found
in~\cite{Altenkirch-IndexedCont,Chapman:GentleArtLevitation,%
  Dagand-McBride:Ornaments,Gambino-WTypesPolyFunc,HamanaFiore-GADT,%
  Loh-Magalhaes:IndexedFunctors,Moerdijk-WellFoundedTrees},
where semantics have been given for inductive types through polynomial functors
in the category of set families or in locally Cartesian closed categories.
Similarly, semantics for non-dependent coinductive types have been given
in~\cite{AbbottThesis,AbbottContainer,vdBerg-Non-wellfoundedTrees} by using
polynomial functors on locally Cartesian closed categories.
Finally, an interpretation for Martin-Löf type theory (without recursive
type definitions) has been given in~\cite{Seely-LCCC-Types} and corrected
in~\cite{Hofmann-LCCC-Types}.

So far, we are, however, lacking a full picture of dependent coinductive types
that arise as duals of dependent inductive types.
To actually get such a picture, I extend in the present work
Hagino's idea~\cite{Hagino-Dialg}, of using dialgebras to describe data types,
to dependent types.
This emphasises the actual structure behind (co)inductive types as their
are used in systems like Agda.\footnote{It should be noted that, for example,
  Coq treats coinductive types differently. In fact, the route taken in Agda
  with copatterns and in this work is much better behaved.}
Moreover, dialgebras allow for a direct interpretation of types in this
categorical setup, without going through translations into, for example,
polynomial functors.

Having defined the structures we need to interpret dependent data types,
it is natural to ask whether this structure is actually sensible.
The idea, pursued here, is that we want to obtain initial and final dialgebras
from initial algebras and final coalgebras for polynomial functors.
This is achieved by showing that the dialgebras in this work
correspond to algebras and coalgebras, and that their fixed points can be
constructed from fixed points of polynomial functors (in the sense of
\cite{GambinoKockPolyFunctors}).
% The reduction to polynomial functors does not work for all types for the time
% being, but rather only  for those that have an arbitrary nesting
% of non-dependent coinductive and dependent inductive types, and a dependent
% coinductive type on the top-level.

To summarise, this paper makes the following contributions.
First, we get a precise description of the categorical structure necessary to
interpret inductive and coinductive data types, which can be seen as
categorical semantics for an extension of the inductive and (copattern-based)
coinductive types of Agda.
The second contribution is a reduction to fixed points of polynomial functors.

What has been left out, because of space constraints, is an analysis of
the structures needed to obtain induction and coinduction principles.
Moreover, to be able to get a sound interpretation, with respect to type
equality of dependent types, we need to require a Beck-Chevalley condition.
This condition can be formulated for general (co)inductive types,
but is also not given here.

\begin{description}
\item[Related work] As already mentioned, there is an enormous body of work
  on obtaining semantics for (dependent) inductive, and to some extent,
  coinductive types, see~%
  \cite{Altenkirch-IndexedCont,Gambino-WTypesPolyFunc,HamanaFiore-GADT,%
    Moerdijk-WellFoundedTrees}.
  In the present work, we will mostly draw from \cite{AbbottContainer}
  and \cite{GambinoKockPolyFunctors}.
  Categorical semantics for basic Martin-Löf type theory have been developed,
  for example, in~\cite{Hofmann-LCCC-Types}.
  An interpretation, closer to the present work, is given in terms of fibrations
  by Jacobs~\cite{Jacobs1999-CLTT}.
  In the first part of the paper, we develop everything on rather arbitrary
  fibrations, which makes the involved structure more apparent.
  Only in the second part, where we reduce data types to polynomial functors,
  we will work with slice categories, since most of the work on polynomial
  functors in that setting~\cite{AbbottContainer,GambinoKockPolyFunctors}.
  Last, but not least, the starting idea of this paper is of course inspired
  by the dialgebras of Hagino~\cite{Hagino-Dialg}.
  These have also been applied to give semantics to
  induction-induction~\cite{Altenkirch-IndInd} schemes.
\item[Outline]
  The rest of the paper is structured as follows.
  In \secRef{fib-dialg}, we analyse a typical example of a dependent inductive
  type, namely vectors, that is, lists indexed by their length.
  We develop from this example a description of inductive and coinductive
  dependent data types in terms of dialgebras in fibrations.
  This leads to the requirements on a fibration, given in \secRef{dt-complete},
  that allow the interpretation of data types.
  In the same section, we show how dependent and fibre-wise (co)products arise
  canonically in such a structure, and we give an example of a coinductive type
  (partial streams) that can only be treated in Agda through a cumbersome
  encoding.
  The reduction of dependent data types to polynomial functors is carried out
  in \secRef{construct-dt}, and finish with concluding remarks in
  \secRef{concl}.
\item[Acknowledgement]
  I would like to thank the anonymous reviewers, who gave very valuable feedback
  and pointed me to some more literature.
\end{description}

% The paper is structured as follows.
% In \secRef{fib-dialg}, we analyse a typical example of a dependent inductive
% type: vectors, that is, lists indexed by their length.
% We develop from this example a description of inductive and coinductive
% dependent data types in terms of dialgebras in fibrations, and show how these
% relate to algebras and coalgebras.
% Next, we establish in \secRef{dt-complete} the requirements on a fibration that
% allow the interpretation of data types, and show how dependent and
% fibre-wise (co)products arise.
% \secRef{b-c-cond} is concerned with the definition and consequences of a
% Beck-Chevalley condition for data types.
% Having given a precise account for the definition of data types, we go on and
% study induction and coinduction principles in \secRef{co-induction}.
% We give simple versions of these principles already derivable without further
% assumption, and then analyse the precise extra requirements we need to make
% induction work.
% The last part largely follows
% \cite{HermidaJacobs-IndFib} and \cite{Ghani-IndexedInduct}.

% \paragraph*{Acknowledgements}
% I would like to thank Daniela Petrișan for providing me insights in fibred
% adjunctions, and Herman Geuvers for the fruitful discussions about the
% present work and logic in general.

%%% Local Variables:
%%% TeX-master: "../FibDialg-FICS"
%%% ispell-local-dictionary: "british"
%%% mode: latex
%%% End:

\section{Fibrations and Dependent Data Types}
\label{sec:fib-dialg}

In this section we introduce \emph{dependent data types} as initial and
final dialgebras of certain functors on fibres of fibrations.
We go through this setup step by step.

Let us start with dialgebras and their homomorphisms.
\begin{definition}
  \label{def:dialgebra}
  Let $\Cat{C}$ and $\Cat{D}$ be categories and $F, G : \Cat{C} \to \Cat{D}$
  functors.
  An $(F, G)$-\emph{dialgebra} is a morphism $c : F A \to G A$ in $\Cat{D}$,
  where $A$ is an object in $\Cat{C}$.
  Given dialgebras $c : F A \to G A$ and $d : F B \to G B$, a morphism
  $h : A \to B$ is said to be a (dialgebra) \emph{homomorphism} from $c$ to $d$,
  if $G h \, \circ \, c = d \, \circ \, F h$.
  This allows us to form a category $\DialgC{F,G}$, in which objects
  are pairs $(A, c)$ with $A \in \Cat{C}$ and $c : F A \to G A$, and morphisms
  are dialgebra homomorphisms.
\end{definition}

The following example shows that dialgebras arise naturally from data types.
\begin{example}
  \label{ex:vectors-intro}
  Let $A$ be a set, we denote by $A^n$ the $n$-fold product of $A$, that is,
  lists of length $n$.
  Vectors over $A$ are given by the set family
  $\VecT A = \{A^n\}_{n \in \N}$, which is an object in the category $\SetC^\N$
  of families indexed by $\N$.
  In general, this category is given for a set $I$ by
  \begin{equation*}
    \SetC^I =
    \begin{cases}
      \text{objects } & X = \{X_i\}_{i \in I} \\
      \text{morphisms }  & f = \{f_i : X_i \to Y_i\}_{i \in I}
    \end{cases}.
  \end{equation*}

  Vectors come with two constructors: $\nil : \T \to A^0$ for the empty vector
  and prefixing $\cons_n : A \times A^n \to A^{n+1}$ of vectors with elements
  of $A$.
  We note that $\nil : \{\T\} \to \{A^0\}$ is a morphism in the category
  $\SetC^\T$ of families indexed by the one-element set $\T$, whereas
  $\cons = \{\cons_n\} : \{A \times A^n\}_{n \in \N} \to \{A^{n+1}\}_{n \in \N}$
  is a morphism in $\SetC^\N$.

  Let $F, G : \SetC^\N \to \SetC^{\T} \times \SetC^{\N}$ be the functors into
  the product of $\SetC^\T$ and $\SetC^\N$ with
  \begin{align*}
    F(X) = \lparen \{\T\}, \{A \times X_n\}_{n \in \N})
    \qquad G(X) = (\{X_0\}, \{X_{n+1}\}_{n \in \N}).
  \end{align*}
  Using these, we find that $(\nil, \cons) : F(\VecT A) \to G(\VecT A)$
  is an $(F, G)$-dialgebra, in fact, it is the \emph{initial} $(F,G)$-dialgebra.
\end{example}

\begin{definition}
  An $(F,G)$-dialgebra $c : F A \to G A$ is called \emph{initial}, if for
  every $(F,G)$-dialgebra $d : F B \to G B$ there is a unique homomorphism $h$
  from $c$ to $d$, the \emph{inductive extension} of $d$.
  Dually, $(A, c)$ is \emph{final}, provided there is a unique homomorphism
  $h$ from any other dialgebra $(B,d)$ into $c$.
  Here, $h$ is the \emph{coinductive extension} of $d$.
\end{definition}

Having found the algebraic structure underlying vectors, we continue by
exploring how we can handle the change of indices in the constructors.
It turns out that this is most conveniently done by using fibrations.
\begin{definition}
  \label{def:fibration}
  Let $P : \TCat \to \BCat$ be a functor, where the $\TCat$ is called the
  \emph{total} category and $\BCat$ the \emph{base} category.
  A morphism $f : A \to B$ in $\TCat$ is said to be \emph{cartesian over}
  $u : I \to J$, provided that
  \begin{enumerate*}[label=\roman*)]
  \item $P f = u$, and
  \item for all $g : C \to B$ in $\TCat$ and $v : PC \to I$ with
    $Pg = u \circ v$ there is a unique $h : C \to A$ such that $f \circ h = g$.
  \end{enumerate*}
  For $P$ to be a \emph{fibration}, we require that
  for every $B \in \TCat$ and $u : I \to PB$ in $\Cat{B}$, there is
  a cartesian morphism $f : A \to B$ over $u$.
  Finally, a fibration is \emph{cloven}, if it comes with a unique choice
  for $A$ and $f$, in which case we denote $A$ by $\reidx{u}B$ and
  $f$ by $\cartL{u} B$, as displayed in the diagram on the right.
\end{definition}

\begin{wrapfigure}[6]{r}{.32\textwidth}
\vspace{-1.8\baselineskip}
%\hspace*{-20pt}
\begin{equation*}
  \begin{tikzcd}[row sep=0.1cm, column sep=0.25cm]
    C \arrow[bend left=15]{drrr}{g}
      \arrow[dashed,shorten >= -5pt]{dr}[swap]{!h}
      %\ar[dd, dotted, no head]
      & & \\
    & \reidx{u} B
      \arrow{rr}[swap]{\cartL{u} B}
      %\ar[dd, dotted, no head]
    & & B
      %\ar[dd, dotted, no head]
      & \TCat \arrow{dddd}{P}
    \\ \\ \\
    PC
      \arrow[bend left=15]{drrr}{P g}
      \arrow{dr}[swap]{v} & & \\
    & I
      \arrow{rr}[swap]{u} & & PB
      & \BCat
  \end{tikzcd}
\end{equation*}
\end{wrapfigure}

At first sight, this definition is arguably intimidating to someone who has
never been exposed to fibrations.
The idea is that the base category $\BCat$ contains as objects the indices
of objects in $\TCat$, and as morphisms substitutions.
The result of carrying out a substitution on indices, is captured by
the Cartesian lifting property.
Let us illustrate this on set families.
We define $\Fam{\SetC}$ to be the category
\begin{equation*}
  \Fam{\SetC} =
  \begin{cases}
    \text{objects } & (I, X : I \to \SetC), \, I \text{ a set} \\
    \text{morphisms } & (u, f) : (I, X) \to (J, Y) \text{ with }
    u : I \to J \text{ and }
    \{f_i : X_i \to Y_{u(i)}\}_{i \in I}
  \end{cases}
\end{equation*}
in which composition is defined by
\begin{equation*}
  (v, g) \circ (u, f) = \left( v \circ u,
    \{X_i \xrightarrow{f_i} Y_{u(i)} \xrightarrow{g_{u(i)}} Z_{v(u(i))}\}_{i \in I}
  \right).
\end{equation*}
A concrete object is the pair $(\N, \VecT A)$, where $\VecT A$
is the family of vectors from \iExRef{vectors-intro}.

We define a cloven fibration on set families.
Let $P : \Fam{\SetC} \to \SetC$ be the projection on the first
component, that is, $P(I, X) = I$ and $P(u, f) = u$.
For a family $(J, Y)$ and a function $u : I \to J$, we define
$\reidx{u} Y = \{Y_{u(i)}\}_{i \in I}$ and
$\cartL{u} Y = \left(u, \{\id : Y_{u(i)} \to Y_{u(i)}\}_{i \in I} \right)$.
Then, for each $(w, g) : (K, Z) \to (J,Y)$ and $v : K \to I$ with
$w = u \circ v$, we can define the morphism
$(K, Z) \to (I, \reidx{u}Y)$ to be $(v, h)$ with
$h_k : Z_k \to Y_{u(v(k))}$ and $h_k = g_k$, since $u(v(k)) = w(k)$.

An important concept is the \emph{fibre above} an object $I \in \BCat$, given by
the category
\begin{equation*}
  \Cat{P}_I =
  \begin{cases}
    \text{objects } & A \in \Cat{E} \text{ with } P(A) = I \\
    \text{morphisms } & f : A \to B \text{ with } P(f) = \id_I
  \end{cases}.
\end{equation*}
In a cloven fibration, we can use the Cartesian lifting to define for each
$u : I \to J$ in $\Cat{B}$ a functor $\reidx{u} : \Cat{P}_J \to \Cat{P}_I$,
together with natural isomorphisms $\Id_{\Cat{P}_I} \cong \reidx{\id_I}$ and
$\reidx{u} \circ \reidx{v} \cong \reidx{(v \circ u)}$,
see~\cite[Sec. 1.4]{Jacobs1999-CLTT}.
The functor $\reidx{u}$ is called \emph{reindexing} along $u$.

\begin{assumption}
  We assume all fibrations to be cloven in this work.
\end{assumption}

We are now in the position to take a more abstract look at our initial example.
\begin{example}
  First, we note that the fibre of $\Fam{\SetC}$ above $I$ is isomorphic to
  $\SetC^I$.
  Let then $z : \T \to \N$ and $s : \N \to \N$ be $z(\ast) = 0$ and
  $s(n) = n + 1$, giving us reindexing functors
  $\reidx{z} : \SetC^\N \to \SetC^\T$ and $\reidx{s} : \SetC^\N \to \SetC^\N$.
  By their definition, $\reidx{z}(X) = \{X_0\}$ and
  $\reidx{s}(X) = \{X_{n+1}\}_{n \in \N}$, hence the functor $G$, we used to
  describe vectors as dialgebra, is
  $G = \prodArr{\reidx{z}, \reidx{s}}$.
  In \iSecRef{dt-complete}, we address the structure of $F$.
\end{example}

We generalise this situation to account for arbitrary data types.
\begin{definition}
  Let $P : \Cat{E} \to \Cat{B}$ be a fibration.
  A \emph{(dependent) data type signature}, parameterised by a category $\Cat{C}$,
  is a pair $(F, u)$ consisting of
  \begin{itemize}
  \item a functor $F : \Cat{C} \times \Cat{P}_I \to \Cat{D}$ with
    $\Cat{D} = \prod_{k = 1}^n \Cat{P}_{J_k}$ for some $n \in \N$ and $J_k,I \in \BCat$, and
  \item a family $u$ of $n$ morphisms in $\BCat$ with
    $u_k : J_k \to I$ for $k = 1, \dotsc, n$.
  \end{itemize}
\end{definition}
A family $u$ as above induces a functor
$\prodArr{\reidx{u_1}, \dotsc, \reidx{u_n}} : \Cat{P}_I \to \Cat{D}$,
which we will often denote by $G_u$.
This will enable us to define data types for such signatures, but
let us first look at an example for the case $\Cat{C} = \T$, that is, if
$F : \Cat{P}_I \to \Cat{D}$ is not parameterised.
\begin{example}
  \label{ex:dep-sum-prod-pw}
  A fibration $P : \TCat \to \BCat$ is said to have dependent coproducts and
  products, if for each $f : I \to J$ in $\Cat{B}$ there are functors
  $\coprod_f$ and $\prod_f$ from $\Cat{P}_{I}$ to $\Cat{P}_{J}$ that are
  respectively left and right adjoint to $\reidx{f}$.
  For each $X \in \Cat{P}_I$, we can define a signature, such that
  $\coprod_f(X)$ and $\prod_f(X)$ arise as data types for these signatures,
  as follows.
  Define the constant functor
  \begin{align*}
    K_X : \Cat{P}_{J} \to \Cat{P}_I
    \qquad K_X(Y) = X
    \qquad K_X(g) = \id_X.
  \end{align*}
  Then $(K_X, f)$ is the signature for coproducts and products.
  For example, the unit $\eta$ of the
  adjunction $\coprod_f \dashv \reidx{f}$ will be the initial
  $(K_X,\reidx{f})$-dialgebra
  $\eta_X : K_X(\coprod_f(X)) \to \reidx{f}(\coprod_f(X))$, using that
  $K_X(\coprod_f(X)) = X$.
  We come back to this in \iExRef{dep-sum-prod}.
 \qed
\end{example}

To define data types in general, we allow them to have additional parameters,
that is, we allow signatures $(F, u)$, where
$F : \Cat{C} \times \Cat{P}_I \to \Cat{D}$ and $\Cat{C}$ is a non-trivial
category.
Let us first fix some notation.
We put $F(V, -)(X) = F(V, X)$ for $V \in \Cat{C}$, which is a functor
$\Cat{P}_I \to \Cat{D}$.
Assume that the initial
$(F(V, -), G_u)$-dialgebra $\alpha_V : F(V, \Phi_V) \to G_u(\Phi_V)$ and final
$(G_u, F(V, -))$-dialgebra $\xi_V : G_u(\Omega_V) \to F(V,\Omega_V)$ exist.
Then we can define functors
$\mu (\funCur{F}, \funCur{G_u}) :  \Cat{C} \to \Cat{P}_I$
and $\nu (\funCur{G_u}, \funCur{F}) : \Cat{C} \to \Cat{P}_I$,
analogous to~\cite{Jiho2010}, by
\begin{align*}
  & \mu (\funCur{F}, \funCur{G_u})(V) = \Phi_V
  & & \mu (\funCur{F}, \funCur{G_u})(f : V \to W) =
    \indExtW{\alpha_W \circ F(f, \id_{\Phi_W})} \\
  & \nu (\funCur{G_u}, \funCur{F})(V) = \Omega_V
  & & \nu (\funCur{G_u}, \funCur{F})(f : V \to W) =
    \coindExtW{F(f, \id_{\Omega_V}) \circ \xi_V},
\end{align*}
where the bar and tilde superscripts denote the inductive and coinductive
extensions, that is, the unique homomorphism given by initiality
and finality, respectively.
The reason for the notation $\mu (\funCur{F}, \funCur{G_u})$ and
$\nu (\funCur{G_u}, \funCur{F})$ is that these are initial and final dialgebras
for the functors
\begin{align*}
  \funCur{F}, \funCur{G_u} :
    \FunCat{\Cat{C}}{\Cat{P}_I} \to \FunCat{\Cat{C}}{\Cat{D}}
  \qquad \funCur{F}(H) = F \circ \prodArr{\Id_{\Cat{C}}, H}
  \qquad \funCur{G_u}(H) = G_u \circ H
\end{align*}
on functor categories.
That the families $\alpha_V$ and $\xi_V$ are natural in $V$ follows directly
from the definition of the functorial action as (co)inductive extensions.
Hence, they give rise to dialgebras
$\nat{\alpha}{\funCur{F}(\mu (\funCur{F}, \funCur{G_u}))}
  {\funCur{G_u}(\mu (\funCur{F}, \funCur{G_u}))}$
and
$\nat{\xi}{\funCur{G_u}(\nu (\funCur{G_u}, \funCur{F}))}
  {\funCur{F}(\nu (\funCur{G_u}, \funCur{F}))}$.

\begin{definition}
  Let $(F,u)$ be a data type signature.
  An \emph{inductive data type} (IDT) for $(F, u)$ is an initial
  $(\funCur{F}, \funCur{G_u})$-dialgebra with carrier
  $\mu  (\funCur{F}, \funCur{G_u})$.
  Dually, a coinductive data type (CDT) for $(F,u)$ is a final
  $(\funCur{G_u}, \funCur{F})$-dialgebra, note the order, with the
  carrier being denoted by $\nu (\funCur{G_u}, \funCur{F})$.
  If $\Cat{C} = \T$, we drop the hats from the notation.
\end{definition}

\begin{example}
  \label{ex:dep-sum-prod}
  We turn the definition of the product and coproduct from
  \iExRef{dep-sum-prod-pw} into actual functors.
  The observation we use is that the projection functor
  $\pi_1 : \Cat{P}_I \times \Cat{P}_J \to \Cat{P}_I$ gives us a
  ``parameterised'' constant functor: $K_A^J = \pi_1(A, -)$.
  If we are given $f : I \to J$ in $\Cat{B}$, then we use the signature
  $(\pi_1, f)$, and define
  $\coprod_f =  \mu (\funCur{\pi_1}, \funCur{\reidx{f}})$ and
  $\prod_f = \nu (\funCur{\reidx{f}}, \funCur{\pi_1})$.
  We check the details of this definition in \iThmRef{adjunctions-from-dt}.
\end{example}

%%% Local Variables:
%%% TeX-master: "../FibDialg"
%%% ispell-local-dictionary: "british"
%%% mode: latex
%%% End:
\section{Data Type Completeness}
\label{sec:dt-complete}

We now define a class of signatures and functors that should be seen as
categorical language for, what is usually called, strictly positive
types~\cite{Altenkirch-IndexedCont}, positive generalised abstract data
types~\cite{HamanaFiore-GADT} or
descriptions~\cite{Chapman:GentleArtLevitation,Dagand-McBride:Ornaments}.
Note, however, that none of these treat coinductive types.
A \emph{non-dependent} version of strictly positive types that include
coinductive types are given in~\cite{AbbottContainer}.

Let us first introduce some notation.
Given categories $\Cat{C}_1$ and $\Cat{C}_2$ and an object $A \in \Cat{C}_1$, we
denote by $K_A^{\Cat{C}_1} : \Cat{C}_1 \to \Cat{C}_2$ the functor mapping
constantly to $A$.
The projections on product categories are denoted, as usual, by
$\pi_k : \Cat{C}_1 \times \Cat{C}_2 \to \Cat{C}_k$.
Using these notations, we can define what we understand to be a data type
by mutual induction.
\begin{definition}
  \label{def:dt-complete}
  A fibration $P : \TCat \to \BCat$ is \emph{data type complete}, if all IDTs
  and CDTs for \emph{strictly positive signatures} $(F, u) \in \SPSig$ exist,
  where $\SPSig$ is given by the following rule.

  \begin{equation*}
    \AxiomC{$\Cat{D} = \prod_{i = 1}^n \Cat{P}_{J_i}$}
    \AxiomC{$F \in \SPDT[\Cat{C} \times \Cat{P}_I \to \Cat{D}]$}
    \AxiomC{$u = (u_1 : J_1 \to I, \dotsc, u_n : J_n \to I)$}
    \TrinaryInfC{$(F, u) \in \SPSig[\Cat{C} \times \Cat{P}_I \to \Cat{D}]$}
    \DisplayProof
  \end{equation*}
  The functors in $\SPDT$ are given by the following rules,
  assuming that $P$ is data type complete.

  \begin{gather*}
    \AxiomC{$A \in \Cat{P}_J$}
    \UnaryInfC{$K_A^{\Cat{P}_I} \in \SPDT[\Cat{P}_I \to \Cat{P}_J]$}
    \DisplayProof
    \quad
    \AxiomC{$\Cat{C} = \prod_{i=1}^n \Cat{P}_{I_i}$}
    \UnaryInfC{$\pi_k \in \SPDT[\Cat{C} \to \Cat{P}_{I_k}]$}
    \DisplayProof
    \quad
    \AxiomC{$f : J \to I \text{ in } \Cat{B}$}
    \UnaryInfC{$\reidx{f} \in \SPDT[\Cat{P}_I \to \Cat{P}_J]$}
    \DisplayProof
    \quad
    \AxiomC{$F_1 \in \SPDT[\Cat{P}_I \to \Cat{P}_K]$}
    \AxiomC{$F_2 \in \SPDT[\Cat{P}_K \to \Cat{P}_J]$}
    \BinaryInfC{$F_2 \circ F_1 \in \SPDT[\Cat{P}_I \to \Cat{P}_J]$}
    \DisplayProof
    \\[7pt]
    \AxiomC{$F_i \in \SPDT[\Cat{P}_I \to \Cat{P}_{J_i}]$}
    \AxiomC{$i = 1,2$}
    \BinaryInfC{$\prodArr{F_1, F_2}
      \in \SPDT[\Cat{P}_I \to \Cat{P}_{J_1} \times \Cat{P}_{J_2}]$}
    \DisplayProof
    \quad
    \AxiomC{$(F, u) \in \SPSig[\Cat{C} \times \Cat{P}_I \to \Cat{D}]$}
    \UnaryInfC{$\mu (\funCur{F}, \funCur{G_u}) \in \SPDT[\Cat{C} \to \Cat{P}_I]$}
    \DisplayProof
    \quad
    \AxiomC{$(F, u) \in \SPSig[\Cat{C} \times \Cat{P}_I \to \Cat{D}]$}
    \UnaryInfC{$\nu (\funCur{G_u}, \funCur{F}) \in \SPDT[\Cat{C} \to \Cat{P}_I]$}
    \DisplayProof
  \end{gather*}
  This mutual induction is well-defined, as it can be stratified
  in the nesting of fixed points.
\end{definition}

As a first sanity check, we show that a data type complete fibration has, both,
fibrewise and dependent (co)products.
These are instances of the following, more general, result.
\begin{theorem}
  \label{thm:adjunctions-from-dt}
  Suppose $P : \TCat \to \BCat$ is a data type complete fibration.
  Let $\Cat{C} = \prod_{i=1}^m \Cat{P}_{K_i}$ and
  $\pi_1 : \Cat{C} \times \Cat{P}_I \to \Cat{C}$ be the first projection.
  If $G_u : \Cat{P}_I \to \Cat{C}$ is such that $(\pi_1, u)$ is a
  signature, then we have the following adjoint situation:
  \begin{equation*}
    \mu (\funCur{\pi_1}, \funCur{G_u}) \dashv G_u
    \dashv \nu (\funCur{G_u}, \funCur{\pi_1}).
  \end{equation*}
\end{theorem}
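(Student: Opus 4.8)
First I would show the left adjoint $\mu(\funCur{\pi_1},\funCur{G_u}) \dashv G_u$; the right adjoint $G_u \dashv \nu(\funCur{G_u},\funCur{\pi_1})$ then follows by a formal dualisation (swap initial/final, source/target of the dialgebras, and $\funCur{\pi_1}$ with $\funCur{G_u}$ in the appropriate slots), so only one of the two arguments needs to be carried out in detail. For the first adjunction I would produce the unit and establish the universal property fibrewise, i.e. object by object in $\Cat{C}$, using the $\funCur{(-)}$-level description only to get naturality for free.

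Concretely: fix $V \in \Cat{C}$ and let $\alpha_V : \pi_1(V, \Phi_V) \to G_u(\Phi_V)$ be the initial $(\pi_1(V,-), G_u)$-dialgebra, where $\Phi_V = \mu(\funCur{\pi_1},\funCur{G_u})(V)$. Since $\pi_1(V, \Phi_V) = V$, the dialgebra $\alpha_V$ is literally a morphism $V \to G_u(\Phi_V)$ in $\Cat{C}$; I claim this is the unit $\eta_V$ of the adjunction. To verify the universal property, take any $W \in \Cat{P}_I$ and any $g : V \to G_u(W)$ in $\Cat{C}$. Then $g$ is exactly a $(\pi_1(V,-), G_u)$-dialgebra with carrier $W$ (again because $\pi_1(V,W) = V$), so by initiality of $\alpha_V$ there is a unique dialgebra homomorphism $h : \Phi_V \to W$, i.e. a unique morphism in $\Cat{P}_I$, satisfying $G_u(h) \circ \alpha_V = g \circ \pi_1(\id_V, h) = g$. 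Unwinding, $h$ is the unique morphism with $G_u(h) \circ \eta_V = g$, which is precisely the universal property of the unit. Naturality of $\eta$ in $V$ is exactly the statement, proved in the paragraph preceding \iDefRef{dt-complete} (the one defining $\mu(\funCur{F},\funCur{G_u})$ on morphisms), that $\alpha : \funCur{\pi_1}(\mu(\funCur{\pi_1},\funCur{G_u})) \Rightarrow \funCur{G_u}(\mu(\funCur{\pi_1},\funCur{G_u}))$ is a natural transformation; and for the present $F = \pi_1$ the domain functor is the identity on $\FunCat{\Cat{C}}{\Cat{P}_I}$ precomposed with a projection, so $\funCur{\pi_1}(H) = K^{\Cat{P}_I}_{(-)}$-style, and $\alpha$ becomes a natural transformation $\Id_{\Cat{C}} \Rightarrow G_u \circ \mu(\funCur{\pi_1},\funCur{G_u})$, i.e. a genuine unit. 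The right-adjoint half is the same argument read in the opposite category: the final $(G_u, \pi_1(V,-))$-dialgebra $\xi_V : G_u(\Omega_V) \to V$ serves as the counit, and finality gives, for each $g : G_u(W) \to V$, a unique $h : W \to \Omega_V$ with $\xi_V \circ G_u(h) = g$.

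**The main obstacle.** The mathematics here is almost tautological once one notices that $\pi_1(V,-)$ is constant, so the only real work is bookkeeping: checking that "dialgebra with carrier $W$" and "morphism $V \to G_u(W)$" coincide on the nose (no coherence isomorphisms intervene because $\pi_1(V,W)=V$ strictly), and that the functoriality clauses for $\mu$ and $\nu$ from the text deliver exactly the naturality squares for unit and counit rather than some laxer variant. I expect the fiddliest point to be confirming that the $\funCur{(-)}$-construction on functor categories really does turn the family $(\alpha_V)_V$ into a single dialgebra $\alpha$ whose being-a-homomorphism encodes naturality — but this is asserted in the excerpt ("That the families $\alpha_V$ and $\xi_V$ are natural in $V$ follows directly from the definition of the functorial action as (co)inductive extensions"), so I may cite it. After that, I would close by noting that composing the two adjunctions with the projection gives, in the special case $\Cat{C} = \T$ and $G_u = \reidx{f}$, the dependent sums and products of \iExRef{dep-sum-prod}, and with an $m$-fold product of fibres one gets the fibrewise versions, which is the promised sanity check.
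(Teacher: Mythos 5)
Your proposal is correct and follows essentially the same route as the paper: both hinge on the observation that $\funCur{\pi_1}(H)=\Id_{\Cat{C}}$ (i.e.\ $\pi_1(V,-)$ is constant), so that a dialgebra with carrier $W$ is literally a morphism $V \to G_u W$, and then use (co)initiality to get the transposes/universal property, with naturality delegated to the earlier discussion and the second adjunction obtained by duality. Presenting the data as the unit $\eta_V = \alpha_V$ with its universal property rather than as a hom-set bijection is only a cosmetic difference from the paper's proof.
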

\begin{proof}
  We only show how the adjoint transposes are obtained in the case of inductive
  types.
  Concretely, for a tuple $V \in \Cat{C}$ and an object $A \in \Cat{P}_I$,
  we need to prove the correspondence
  \begin{equation*}
    \begin{adjunction}
%      \SwapAboveDisplaySkip
      f : & \mu (\funCur{\pi_1}, \funCur{G_u})(V) & A & \text{in } \Cat{P}_I \\ \hhline{====}
      g : & V & G_u A & \text{in } \Cat{C}
    \end{adjunction}
  \end{equation*}
  Let us use the notation $H = \mu (\funCur{\pi_1}, \funCur{G_u})$, then
  the choice of $\pi_1$ implies that the initial
  $(\funCur{\pi_1}, \funCur{G_u})$-dialgebra is of type
  $\nat{\alpha}{\Id_{\Cat{C}}}{G_u \circ H}$, since
  $\funCur{\pi_1}(H) = \pi_1 \circ \prodArr{\Id_\Cat{C}, H} = \Id_{\Cat{C}}$
  and $\funCur{G_u}(H) = G_u \circ H$.
  This allows us to use as transpose of $f$ the morphism
  $V \xrightarrow{\alpha_V} G_u(H(V)) \xrightarrow{G_uf} G_uA$.
  As transpose of $g$, we use the inductive extension of
  $\funCur{\pi_1}(K_A^{\Cat{C}})(V) = V
  \xrightarrow{g} G_uA = \funCur{G_u}(K_A^\Cat{C})(V)$.
  The proof that this correspondence is natural and bijective follows
  straightforwardly from initiality.
  For coinductive types, the result is given by duality.
\end{proof}

This gives fibrewise coproducts by
$+_I = \mu (\funCur{\pi_1}, \funCur{G_u})$ and products by
$\times_I = \nu (\funCur{G_u}, \funCur{\pi_1})$, using
$u = (\id_I, \id_I)$.
Dependent (co)products along $f : I \to J$ use $u = f$, see
\iExRef{dep-sum-prod}.

There are many more examples of data types that exist in a data type complete
fibration.
We describe three fundamental ones.
\begin{example}
  \label{ex:dt-complete}
  \begin{enumerate}
  \item The first example are initial and final objects inside
    the fibres $\Cat{P}_I$.
    Since an initial object is characterised by having a unique morphism
    \emph{to} every other object, we define it as an initial
    dialgebra, namely $\inObj_I = \mu(\Id, \reidx{\id_I})$.
    Then there is, for each $A \in \Cat{P}_I$, a unique morphism
    $!^A : \inObj_I \to A$ given as inductive extension of $\id_A$.
    Dually, we define the terminal object $\T_I$ in $\Cat{P}_I$ to be
    $\nu(\reidx{\id_I}, \Id)$ and for each $A$ the corresponding unique
    morphism $!_A : A \to \T_I$ as the coinductive extension of $\id_A$.

    Note that this also follows from \iThmRef{adjunctions-from-dt}, if we
    require that (co)inductive data types also exist if
    $\Cat{C} = \T$ (the empty product) and $u = \{\}$ (empty family of
    morphisms).
    This allows us to define the initial and final object as functors
    $\T \to \Cat{P}_I$.

    % Note that these definitions are different from the ones usually chosen
    % in systems like Coq.
    % There, the initial and final object are usually defined as inductive types
    % without constructors or one trivial constructor, respectively.
    % Such a definition of initial objects could be adapted, if we require that
    % inductive data types also exist for empty signatures, that is, if
    % $F, G : \Cat{P}_I \to \Cat{1}$ for the final category $\Cat{1}$.
    % The alternative definition of final objects is more problematic, as
    % the single constructor of $\T_I$ would need a domain.
    % Since this domain must be canonical, we would have to use $\T_I$ itself,
    % which leads to a circular definition, hence an empty inductive type
    % (note that $\inObj_I$ is defined this way).

  % \item
  %   Final coalgebras of endofunctors $\Endo{F}{\Cat{P}_I}$ are just final
  %   $(\Id, F)$-dialgebras.
  %   Since $\reidx{\id_I} \cong \Id_{\Cat{P}_I}$, they are given as coinductive
  %   data type for the signature $(F, \reidx{\id_I})$.
  %   Dually, initial algebras of $F$ are inductive data types for the same
  %   signature.
  % \item (Fibrewise) (co)products like Hagino's description.
  %   \todo{Write}
  \item
    There are several definable notions of equality, provided that $\Cat{B}$
    has binary products.
    A generic one is propositional equality
    $\Eq \; : \, \Cat{P}_I \to \Cat{P}_{I \times I}$, the left adjoint to the
    contraction functor $\reidx{\delta} : \Cat{P}_{I \times I} \to \Cat{P}_I$,
    which is induced by the diagonal $\delta : I \to I \times I$.
    Thus it is given by the dependent coproduct $\Eq \, = \, \coprod_{\delta}$
    and the constructor $\refl_X : X \to \reidx{\delta}(\Eq X)$.
  \item
    Assume that there is an object $\Str{A}$ in $\BCat$ of streams over $A$,
    together with projections to head and tail.
    Then we can define bisimilarity between streams as CDT for the signature
    \begin{align*}
      & F,G_u : \Cat{P}_{(\Str{A})^2} \to
      \Cat{P}_{(\Str{A})^2} \times \Cat{P}_{(\Str{A})^2} \\
      & F = \prodArr*{\reidx{(\head \times \head)} \circ K_{\Eq(A)},
        \reidx{(\tail \times \tail)}}
      \qquad \text{and} \qquad
      u = (\id_{\Str{A} \times \Str{A}},
        \id_{\Str{A} \times \Str{A}}).
    \end{align*}
    Note that there is a category $\Rel{\TCat}$ of binary relations in $\Cat{E}$
    by forming the pullback of $P$ along $\Delta : \BCat \to \BCat$ with
    $\Delta(I) = I \times I$, see~\cite{HermidaJacobs-IndFib}.
  %   \begin{wrapfigure}[5]{r}{.35\textwidth}
  %     \vspace{-1.8\baselineskip}
  %   \begin{equation*}
  %     \begin{tikzcd}
  %       \Rel{\Cat{E}}
  %       \dar[swap]{\Rel{P}}
  %       \rar{}
  %       \arrow[phantom]{dr}[very near start]{\lrcorner}
  %       & \Cat{E} \dar{P} \\
  %       \Cat{B} \rar{\Delta}
  %       & \Cat{B}
  %     \end{tikzcd}
  %   \end{equation*}
  % \end{wrapfigure}
    Then we can reinterpret $F$ and $G_u$ by
    \begin{align*}
      & F, G_u : \Rel{\TCat}_{\Str{A}} \to
      \Rel{\TCat}_{\Str{A}} \times \Rel{\TCat}_{\Str{A}} \\
      & F = \prodArr{\reidx[\#]{\head} \circ \, K_{\Eq(A)}, \reidx[\#]{\tail}}
      \qquad \text{and} \qquad
      G_u = \prodArr{\reidx[\#]{\id_{\Str{A}}}, \reidx[\#]{\id_{\Str{A}}}},
    \end{align*}
    where $\reidx[\#]{(-)}$ is reindexing in $\Rel{\TCat}$.
    The final $(G_u, F)$-dialgebra is a pair of morphisms
    \begin{equation*}
      (\head^{\sim}_A : \mathrm{Bisim}_A \to \reidx[\#]{\head}(\Eq(A)),
      \tail^{\sim}_A : \mathrm{Bisim}_A \to \reidx[\#]{\tail}(\mathrm{Bisim}_A)).
    \end{equation*}
    $\mathrm{Bisim}_A$ should be thought of to consist of all bisimilarity
    proofs.
    Coinductive extensions yield the usual coinduction proof principle,
    allowing us to prove bisimilarity by establishing a bisimulation relation
    $R \in \Cat{\Rel{E}}_{\Str{A}}$ together with $h : R \to \reidx[\#]{\head}(\Eq(A))$
    and $t : R \to \reidx[\#]{\tail}(R)$, saying that the heads of related
    streams are equal and that the tails of related streams are again related.

    % Later, we will see how bisimilarity arises from lifting the stream
    % signature to relations.
    % \todo{Where?}
  \end{enumerate}
\end{example}

The last example, we give, shall illustrate the additional
capabilities of CDTs in the present setup over those currently
available in Agda.
However, one should note that coinductive types in Agda
provide extra power in the sense that destructors can refer to each other.
This is equivalent to having a strong
coproduct~\cite[Sec. 10.1 and Def. 10.5.2]{Jacobs1999-CLTT}, which we do not
require in the setup of this work and thus
A proof of this equivalence is left out because of space constraints.
\begin{example}
  \label{ex:pstr}
  A partial stream is a stream together with a, possibly infinite, depth up to
  which it is defined.
  Assume that there is an object $\EN$ of natural numbers extended with infinity
  and a successor map $s_\infty : \EN \to \EN$ in $\BCat$, we will see
  how these can be defined below.
  Then partial streams correspond to the following type declaration.
  \begin{lstlisting}[language=Agda,mathescape=true,columns=flexible]
codata PStr (A : Set) : $\EN$ $\to$ Set where
  hd : (n : $\EN$) $\to$ PStr ($s_\infty$ n) $\to$ A
  tl : (n : $\EN$) $\to$ PStr ($s_\infty$ n) $\to$ PStr n
  \end{lstlisting}
  In an explicit, set-theoretic notation, we can define them as a family
  indexed by $n \in \EN$:
  \begin{equation*}
    \PStr{A}_n =
    \setDef{s : \parr{\N}{A}}{
      \forall k < n. \, k \in \dom s
      \wedge \forall k \geq n. \, k \not\in \dom s},
  \end{equation*}
  where the order on $\EN$ is given by extending that of the natural numbers
  with $\infty$ as strict top element, i.e., such that $k < \infty$ for all
  $k \in \N$.

  The interpretation of $\PStr{A}$ for $A \in \Cat{P}_{\T}$ in a data type
  complete fibration is given, similarly to vectors, as the carrier of the final
  $(G_u, F)$-dialgebra, where
  \begin{align*}
    G_u, F : \Cat{P}_{\EN} \to \Cat{P}_{\EN} \times \Cat{P}_{\EN}
    \qquad G_u = \prodArr*{\cramped{\reidx{s_\infty}, \reidx{s_\infty}}}
    \qquad F = \, \cramped{\prodArr*{K_{\overline{A}}^{\EN}, \Id}}
  \end{align*}
  and $\overline{A} = \reidx{!_{\EN}}(A) \in \Cat{P}_{\EN}$ is the weakening of
  $A$ using $!_{\EN} : \EN \to \T$.
  The idea of this signature is that the head and tail of partial streams are
  defined only on those partial streams that are defined in, at least, the first
  position.
  On set families, partial streams are given by the dialgebra
  $\outNu = (\head, \tail)$ with
  $\head_n : \PStr{A}_{(s_\infty \, n)} \to A$ and
  $\tail_n : \PStr{A}_{(s_\infty \, n)} \to \PStr{A}_n$
  for every $n \in \EN$.

  We can make this construction functorial in $A$, using the same ``trick''
  as for sums and products.
  To this end, we define the functor
  $H : \Cat{P}_{\T} \times \Cat{P}_{\EN} \to \Cat{P}_{\EN} \times \Cat{P}_{\EN}$
  with $H = \prodArr*{\cramped{!_{\EN} \circ \pi_1, \pi_2}}$, where
  $\pi_1$ and $\pi_2$ are corresponding projection functors,
  so that $H(A, X) = F(X)$.
  This gives, by data type completeness, rise to a functor
  $\nu (\funCur{G_u}, \funCur{F}) : \Cat{P}_{\EN} \to \Cat{P}_{\EN}$, which we
  denote by $\PStr{}$, together with a pair $(\head, \tail)$ of natural
  transformations.
  \qed
\end{example}

We have seen in the examples above that we would often like to use a data type
again as index, which means that we need a mechanism to turn a data type in
$\TCat$ into an index in $\BCat$.
This is provided by, so called, \emph{comprehension}.
\begin{definition}[See {\cite[Lem. 1.8.8, Def. 10.4.7]{Jacobs1999-CLTT}} and
  \cite{Ghani-IndexedInduct}]
  \label{def:CCU}
  Let $P : \TCat \to \BCat$ be a fibration.
  If each fibre $\Cat{P}_I$ has a final object $\T_I$ and these are preserved by
  reindexing, then there is a fibred \emph{final object functor}
  $\K{} : \BCat \to \TCat$. (Note that then $P(\K{I}) = I$.)
  $P$ is a \emph{comprehension category with unit} (CCU), if $\K{}$ has a
  right adjoint $\compr{-} : \TCat \to \BCat$, the \emph{comprehension}.
  This gives rise to a functor $\mathcal{P} : \TCat \to \ArrC{\BCat}$ into the
  arrow category over $\BCat$, by mapping
  $A \mapsto P(\varepsilon_A) : \compr{A} \to P(A)$, where
  $\nat{\varepsilon}{\K{\compr{-}}}{\Id}$ is the counit of
  $\K{} \dashv \compr{-}$.
  We often denote $\mathcal{P}(A)$ by $\pi_A$ and call it the \emph{projection}
  of $A$.
  Finally, $P$ is said to be a \emph{full} CCU, if $\mathcal{P}$ is full.
\end{definition}

Note that, in a data type complete category, we can define final objects in each
fibre, the preservation of them needs to be required separately.

\begin{example}
  In $\Fam{\SetC}$, the final object functor is given by
  $\K{I} = (I, \{\T\}_{i \in I})$, where $\T$ is the singleton set.
  Comprehension is defined to be
  $\compr{(I, X)} = \coprod_{i \in I} X_i$ and the projections $\pi_I$ map
  then an element of $\coprod_{i \in I} X_i$ to its component $i \in I$.
\end{example}

Using comprehension, we can give a general account to dependent data types.
\begin{definition}
  \label{def:DTCC}
  We say that a fibration $P : \TCat \to \BCat$ is a
  \emph{data type closed category} (DTCC), if it is a CCU, has a terminal object
  in $\BCat$ and is data type complete.
\end{definition}

As already mentioned, the purpose of introducing comprehension is that it allows
us to use data types defined in $\TCat$ again as index.
The terminal object in $\BCat$ is used to introduce data types without
dependencies, like the natural numbers.
Let us reiterate on \iExRef{pstr}.
\begin{example}
  Recall that we assumed the existence of extended naturals $\EN$ and the
  successor map $s_\infty$ on them to define partial streams.
  We are now in the position to define, in a data type closed category,
  everything from scratch as follows.

  Having defined $+ : \Cat{P}_\T \times \Cat{P}_\T \to \Cat{P}_\T$, see
  \iThmRef{adjunctions-from-dt}, we put $\EN = \nu (\Id, \T + \Id)$ and find
  the predecessor $\pred$ as the final dialgebra on $\EN$.
  The successor $s_\infty$ arises as the coinductive extension
  $(\EN, \kappa_2) \to (\EN, \pred)$, where $\kappa_2$ is the coproduct
  inclusion.
  Partial streams
  $\PStr{} : \Cat{P}_{\compr{\EN}} \to \Cat{P}_{\compr{\EN}}$ are then given, as in \iExRef{pstr},
  by the final $(\funCur{G}, \funCur{F})$-dialgebra with
  $G = \prodArr*{\cramped{\reidx{\compr{s_\infty}}, \reidx{\compr{s_\infty}}}}$
  and $F = \prodArr*{\cramped{!_{\EN} \circ \pi_1, \pi_2}}$.
  \qed
\end{example}

\section{Constructing Data Types}
\label{sec:construct-dt}

In this section, we show how some data types can be constructed
through polynomial functors, where I draw from the vast amount of work on
polynomial functors that exists in the literature,
see~\cite{AbbottContainer,GambinoKockPolyFunctors}.
The construction works by, first, reducing dialgebras to (co)algebras
and, second, constructing the necessary initial algebras and final coalgebras
as fixed points of polynomial functors analogously to the construction
of strictly positive types in \cite{AbbottContainer}.
This result works thus far only for data types that, if at all, only use
dependent coinductive types at the top-level.
Nesting of dependent inductive and non-dependent coinductive types works,
however, in full generality.

Before we come to polynomial functors and their fixed points, we show that
inductive and coinductive data types actually correspond to initial algebras and final
coalgebras, respectively.
\begin{theorem}
  \label{thm:iso-dialg-alg}
  Let $P : \Cat{E} \to \Cat{B}$ be a fibration with fibrewise
  coproducts and dependent sums.
  If $(F, u)$ with $F : \Cat{P}_I \to \Cat{P}_{J_1} \times \dotsm \times \Cat{P}_{J_n}$
  is a signature,
  then there is an isomorphism
  \begin{equation*}
    \DialgC{F,G_u} \cong
    \AlgC{\coprod_{u_1} \circ F_1 +_I \dotsm +_I \coprod_{u_n} \circ F_n}
  \end{equation*}
  where $F_k = \pi_k \circ F$ is the $k$th component of $F$.
  In particular, existence of inductive data types and initial algebras
  coincide.
  Dually, if $P$ has fibrewise and dependent products, then
  \begin{equation*}
    \DialgC{G_u,F} \cong
    \CoalgC{\prod_{u_1} \circ F_1 \times_I \dotsm \times_I \prod_{u_n} \circ F_n}.
  \end{equation*}
  In particular, existence of coinductive data types and final coalgebras
  coincide.
\end{theorem}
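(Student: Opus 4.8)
The plan is to factor the statement through an elementary observation about dialgebras: whenever a functor $G : \Cat{C} \to \Cat{D}$ has a left adjoint $L \dashv G$, there is an isomorphism of categories $\DialgC{F, G} \cong \AlgC{L \circ F}$ for every $F : \Cat{C} \to \Cat{D}$. Granting this, it only remains to recognise that, under the hypotheses on $P$, the functor $G_u = \prodArr{\reidx{u_1}, \dotsc, \reidx{u_n}}$ has a left adjoint $L$, and that $L \circ F$ is precisely the functor $\coprod_{u_1} \circ F_1 +_I \dotsm +_I \coprod_{u_n} \circ F_n$ appearing in the statement.

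For the general lemma, an $(F, G)$-dialgebra $c : F A \to G A$ transposes, across $L \dashv G$, to a morphism $\widehat{c} : L F A \to A$, that is, to an $(L \circ F)$-algebra on the same carrier, and this assignment is a bijection on objects. On morphisms, one sends a dialgebra homomorphism $h : A \to B$ to itself, viewed as a map of carriers; the naturality of the adjunction transpose (equivalently, the triangle identities for the unit and counit of $L \dashv G$) shows that $G h \circ c = d \circ F h$ holds in $\Cat{D}$ precisely when $h \circ \widehat{c} = \widehat{d} \circ L F h$ holds in $\Cat{C}$, so dialgebra homomorphisms are exactly algebra homomorphisms. Since the assignment preserves identities and composition of the underlying maps, it is an isomorphism of categories.

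Next I would identify the left adjoint. Writing $\Cat{D} = \prod_{k=1}^n \Cat{P}_{J_k}$ and $F = \prodArr{F_1, \dotsc, F_n}$ by the universal property of the product category, and using the dependent sums $\coprod_{u_k} \dashv \reidx{u_k}$ together with the fibrewise coproduct $+_I$ in $\Cat{P}_I$, one has natural bijections
\[
  \Cat{P}_I\bigl(\coprod\nolimits_{u_1} Y_1 +_I \dotsm +_I \coprod\nolimits_{u_n} Y_n, A\bigr)
  \;\cong\; \prod_{k=1}^{n} \Cat{P}_{J_k}(Y_k, \reidx{u_k} A)
  \;\cong\; \Cat{D}\bigl((Y_1, \dotsc, Y_n), G_u A\bigr),
\]
so $L(Y_1, \dotsc, Y_n) = \coprod_{u_1} Y_1 +_I \dotsm +_I \coprod_{u_n} Y_n$ is left adjoint to $G_u$, and $L \circ F$ is the stated functor. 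The claim about existence then follows because isomorphisms of categories preserve and reflect initial objects, so an inductive data type for $(F, u)$ exists iff the corresponding initial algebra does, and the two agree under the isomorphism. The dual statement is obtained by the same reasoning applied pointwise in $\Cat{D}^{\mathrm{op}}$: the dependent products $\prod_{u_k}$ and the fibrewise product $\times_I$ assemble, via the product universal property, into a right adjoint $R(Y_1, \dotsc, Y_n) = \prod_{u_1} Y_1 \times_I \dotsm \times_I \prod_{u_n} Y_n$ of $G_u$, and $\DialgC{G, F} \cong \CoalgC{R \circ F}$ whenever $G$ has a right adjoint $R$; here $R \circ F$ is the functor $\prod_{u_1} \circ F_1 \times_I \dotsm \times_I \prod_{u_n} \circ F_n$, and final coalgebras correspond to coinductive data types. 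The argument is essentially formal; the only point that requires care is the naturality bookkeeping in the general lemma, that is, verifying that the dialgebra homomorphism equation and the algebra homomorphism equation are interchanged by the transpose, which amounts to a short diagram chase with the triangle identities.
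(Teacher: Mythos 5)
Your proposal is correct and follows essentially the same route as the paper: the paper also obtains the isomorphism by transposing dialgebras across the composite adjunction built from $\coprod_{u_k} \dashv \reidx{u_k}$ and the coproduct--diagonal adjunction, with naturality of the hom-set bijections handling homomorphisms and duality giving the coalgebra case. Your only packaging difference is to isolate the general lemma $\DialgC{F,G} \cong \AlgC{L \circ F}$ for $L \dashv G$ before identifying $L$, which the paper leaves implicit.
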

\begin{proof}
  The first result is given by a simple application of the adjunctions
  $\coprod_{k=1}^n \dashv \Delta_n$ between the (fibrewise) coproduct and the
  diagonal, and $\coprod_{u_k} \dashv \reidx{u_k}$:
  \begin{equation*}
    \begin{adjunction}
%    \SwapAboveDisplaySkip
      & FX & G_uX
      & \text{(in } \Cat{P}_{J_1} \times \dotsm \times \Cat{P}_{J_n} \text{)} \\
      \hhline{====}
      & (\coprod_{u_1} (F_1 X), \dotsc, \coprod_{u_n}(F_n X)) & \Delta_n X
      & \text{(in } \Cat{P}_I^n \text{)} \\
      \hhline{====}
      & \coprod_{k=1}^n \coprod_{u_k} (F_k X) & X
      & \text{(in } \Cat{P}_I \text{)}
    \end{adjunction}
  \end{equation*}
  % \begin{equation*}
  %   \alwaysDoubleLine
  %   \AxiomC{$F X \to G X \qquad$
  %     (in $\Cat{P}_{J_1} \times \dotsm \times \Cat{P}_{J_n}$)}
  %   \UnaryInfC{$(\coprod_{u_1} (F_1 X), \dotsc, \coprod_{u_n}(F_n X)) \to
  %     \Delta_n X \qquad $ (in $\Cat{P}_I^n$)}
  %   \UnaryInfC{$\coprod_{k=1}^n \coprod_{u_k} (F_k X) \to X \qquad$
  %     (in $\Cat{P}_I$)}
  %   \DisplayProof
  % \end{equation*}
  That (di)algebra homomorphisms are preserved follows at once from
  naturality of the used Hom-set isomorphisms.
  % $\Hom[\Cat{P}_A]{\coprod_{k=1}^n X_k}{X} \cong
  % \Hom[\Cat{P}_A^n]{(X_1, \dotsc, X_k)}{\Delta_n X}$
  % and
  % $\Hom[\Cat{P}_A]{\coprod_{u_k} X}{Y} \cong \Hom[\Cat{P}_{B_k}]{X}{\reidx{u_k} Y}$.
  The correspondence for coinductive types follows by duality.
  \qedhere
\end{proof}

To be able to reuse existing work, we work in the following with the codomain
fibration $\cod : \ArrC{\BCat} \to \BCat$ for a category $\BCat$ with pullbacks.
Moreover, we assume that $\BCat$ is locally Cartesian closed,
which is equivalent to say that  $\cod : \ArrC{\BCat} \to \BCat$ is a closed
comprehension category, that is, it is a full CCU with products and coproducts,
and $\BCat$ has a final object, see \cite[Thm 10.5.5]{Jacobs1999-CLTT}.
Finally, we need disjoint coproducts in $\BCat$, which gives us an equivalence
$\slice{\BCat}{I + J} \simeq \slice{\BCat}{I} \times \slice{\BCat}{J}$,
see \cite[Prop. 1.5.4]{Jacobs1999-CLTT}.

\begin{definition}
  A \emph{dependent polynomial} $P$ indexed by $I$ on variables indexed by $J$
  is given by a triple of morphisms
  \begin{equation*}
    \begin{tikzcd}[row sep=0.1em]
      & B \arrow{dl}[swap]{s} \rar{f} & A \arrow{dr}{t} & \\
      J & & & I
    \end{tikzcd}
  \end{equation*}
  If $J = I = \T$, $f$ is said to be a
  \emph{(non-dependent) polynomial}.
  The extension of $P$ is given by the composite
  \begin{equation*}
    \polySem{P} =
    \slice{\BCat}{J} \xrightarrow{\reidx{s}}
    \slice{\BCat}{B} \xrightarrow{\prod_f}
    \slice{\BCat}{A} \xrightarrow{\coprod_t}
    \slice{\BCat}{I},
  \end{equation*}
  which we denote by $\polySem{f}$ if $f$ is non-dependent.
  A functor $F : \slice{\BCat}{J} \to \slice{\BCat}{I}$ is a
  \emph{dependent polynomial functor}, if there is a dependent polynomial $P$
  such that $F \cong \polySem{P}$.
\end{definition}

\begin{remark}
  Note that polynomials are called \emph{containers} by
  Abbott et al.~\cite{AbbottContainer,AbbottThesis}, and
  a polynomial $P = \poly[1]{!}[B]{f}[A]{!}[1]$ would be written as
  $A \triangleright f$.
  % The corresponding extension is given then by
  % \begin{equation*}
  %   X \mapsto \coprod_{a : A} X^{B_a},
  % \end{equation*}
  % where $B_a$ is the fibre of $f$ above $a$.
  Container morphisms, however, are different from those of dependent
  polynomials, as the latter correspond strong natural
  transformations~\cite[Prop. 2.9]{GambinoKockPolyFunctors},
  whereas the former are in exact correspondence with all
  natural transformations between extensions~\cite[Thm. 3.4]{AbbottContainer}.
\end{remark}

Because of this relation, we will apply results for containers that do not
involve morphisms to polynomials.
In particular, \cite[Prop. 4.1]{AbbottContainer} gives us that we can construct
final coalgebras for polynomial functors from initial algebras for polynomial
functors.
The former are called \emph{M-types} and are denoted by $M_f$ for $f : A \to B$,
whereas the latter are \emph{W-types} and denoted by $W_f$.

\begin{assumption}
  We assume that $\BCat$ is closed under the formation of W-types, thus is
  a \emph{Martin-Löf category} in the terminology of \cite{AbbottContainer}.
\end{assumption}
By the above remark, $\BCat$ then also has all M-types.

Analogously to how \cite[Thm. 12]{Gambino-WTypesPolyFunc} extends
\cite[Prop. 3.8]{Moerdijk-WellFoundedTrees},
we extend here \cite[Thm 3.3]{vdBerg-Non-wellfoundedTrees}.
As it was pointed out by one reviewer, this result is actually
in~\cite{vandenberg_nonwellfounded_2007}, the published version of
\cite{vdBerg-Non-wellfoundedTrees}.
\begin{theorem}
  \label{thm:dep-final-coalg-from-non-dep}
  If $\BCat$ has finite limits, then every dependent polynomial functor has a
  final coalgebra in $\slice{\BCat}{I}$.
\end{theorem}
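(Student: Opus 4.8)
The strategy is to reduce a dependent polynomial $P = \poly[J]{s}[B]{f}[A]{t}[I]$ to a non-dependent one by internalising the index $I$ as extra data, so that we may invoke the available M-types (which exist by the Martin-Löf-category assumption on $\BCat$ together with the remark that M-types are built from W-types). Concretely, I would first treat the simplest non-trivial case, namely a polynomial $P = \poly[I]{s}[B]{f}[A]{t}[I]$ whose source and target index are the \emph{same} object $I$; the general case where source is $J$ and target is $I$ reduces to this by precomposing with $\reidx{s}$, which is a right adjoint and hence a dependent polynomial functor itself, so one only needs closure of "has a final coalgebra" under composition with such reindexing — and that follows because $\reidx{s}$ preserves the relevant limits.

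For the square case, the plan is to follow van den Berg's argument: view an object of $\slice{\BCat}{I}$ as a pair $(X \to I)$ and observe that a $\polySem{P}$-coalgebra structure unfolds, via the three adjunctions $\reidx{s} \dashv \prod_f$ and $\coprod_t \dashv \reidx{t}$ defining $\polySem{P}$, into a morphism in $\BCat$ from $X$ into a single object assembled out of $A$, $B$, $f$, $t$ and $X$ itself. I would make this precise by exhibiting a non-dependent polynomial endofunctor $\widehat{P}$ on $\BCat/I$ — or, after the equivalence $\slice{\BCat}{I+\cdots}\simeq \prod \slice{\BCat}{\cdots}$ and using that $\BCat$ is locally Cartesian closed, on a suitable slice — whose coalgebras are in bijective, homomorphism-preserving correspondence with $\polySem{P}$-coalgebras. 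The M-type $M_{\widehat P}$ then exists by assumption, and I would show its carrier, equipped with the canonical projection to $I$, carries the final $\polySem{P}$-coalgebra: finality transports along the correspondence of coalgebra categories exactly as in \iThmRef{iso-dialg-alg}, where an equivalence (here an isomorphism) of the relevant (co)algebra categories immediately matches up final objects.

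The genuinely delicate step is constructing $\widehat P$ and checking that the correspondence of coalgebras is an \emph{iso of categories}, not merely a bijection on objects: one must verify that the unfolding of a coalgebra structure through $\coprod_t \dashv \reidx{t}$ and $\reidx{s}\dashv\prod_f$ is natural enough that coalgebra homomorphisms on one side are exactly coalgebra homomorphisms on the other. This is where local Cartesian closure and the disjointness of coproducts in $\BCat$ (giving $\slice{\BCat}{I+J}\simeq\slice{\BCat}{I}\times\slice{\BCat}{J}$) do the real work — they are what let the "dependent" bookkeeping over $I$ be absorbed into a plain polynomial over a fixed base. I would also need the finite-limits hypothesis precisely to know $\BCat/I$ (and the ambient slices) are themselves finitely complete so that the cited M-type construction applies there. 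Once the coalgebra categories are identified, finality of $M_{\widehat P}$ is inherited for free, and pulling back along $\reidx{s}$ handles the genuinely dependent ($J \neq I$) case, completing the argument.
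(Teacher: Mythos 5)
There is a genuine gap at the heart of your plan: the claimed \emph{isomorphism of coalgebra categories} between $\polySem{P}$-coalgebras and coalgebras for some non-dependent polynomial $\widehat{P}$, from which finality would transport for free, does not exist in general. A non-dependent polynomial internal to the slice $\slice{\BCat}{I}$ (a map $f$ in $\slice{\BCat}{I}$) only yields dependent polynomials $I \xleftarrow{s} B \xrightarrow{f} A \xrightarrow{t} I$ of the special shape $s = t \circ f$; for a general dependent polynomial the index $s\,b$ at which a recursive occurrence sits is unrelated to the index $t\,a$ of the node, and this is exactly the "dependent bookkeeping" that cannot be absorbed into a single plain polynomial. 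What does exist is a forgetful comparison: a $\polySem{P}$-coalgebra gives an $\polySem{f}$-coalgebra by discarding the indexing, but this is far from an isomorphism, and the M-type $M_f$ carries no canonical map to $I$, because it contains ill-indexed trees. If your $\widehat{P}$ existed with the stated properties, the indexed-M-type theorem would be a triviality, whereas it is precisely the nontrivial content of van den Berg's result that the paper extends. (Your reduction of the "$J \neq I$" case is also moot: a final coalgebra only makes sense for an endofunctor, so the theorem already concerns $s,t$ both landing in $I$.)

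The paper's proof shows what is actually needed. One decorates trees with indices by passing to $M_{f \times I}$, defines two maps $u_1, u_2 : M_f \to M_{f \times I}$ corecursively --- $u_1$ labels each node $a$ with its output index $t\,a$, while $u_2$ (via the auxiliary maps $K$ and $\phi$) relabels each subtree with the input index $s\,b$ demanded by its parent --- and takes the equaliser $V$ of $u_1$ and $u_2$, i.e.\ the subobject of \emph{coherently indexed} trees. This is where the finite-limit hypothesis is used (to form the equaliser and the needed pullbacks), not merely to guarantee that slices are finitely complete. One must then still prove, by a direct calculation in one direction and a bisimulation argument in the other, that $\xi_f$ restricts to $V$ and that the restricted coalgebra is final for $\polySem{P}$ over $I$; none of this "comes for free" from an identification of coalgebra categories, and your proposal as it stands assumes away exactly this work.
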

\begin{proof}
  Let $P = \poly[I]{s}[B]{f}[A]{t}[I]$ be a dependent polynomial, we construct,
  analogously to \cite{Gambino-WTypesPolyFunc} the final coalgebra $V$ of
  $\polySem{P}$ as an equaliser as in the following diagram, in which
  $f \times I$ is a shorthand for
  $B \times I \xrightarrow{f \times \id_I} A \times I$ and
  $M_{f \times I}$ is the carrier of the final
  $\polySem{f \times I}$-coalgebra.
  \begin{equation*}
    \begin{tikzcd}[column sep=1.5cm]
      V \rar{g}
      & M_f \arrow[start anchor=north east, end anchor=north west]{r}{u_1}
      \arrow[start anchor=south east, end anchor=south west]{r}[swap]{u_2}
      & M_{f\times I}
    \end{tikzcd}
  \end{equation*}

  First, we give $u_1$ and $u_2$, whose definitions are summarised in the
  following diagrams.
  \begin{equation*}
    \begin{tikzcd}[column sep=3.5em]
      M_f \rar{u_1} \dar{\xi_f}
      & M_{f \times I} \arrow{dd}{\xi_{f \times I}} \\
      \polySem{f}(M_f) \dar{p_{M_f}}
      & \\
      \polySem{f \times I}(M_f) \rar{\polySem{f \times I}(u_1)}
      & \polySem{f \times I}(M_{f \times I})
    \end{tikzcd}
    \begin{tikzcd}[column sep=1em]
      M_f \rar{u_1} \arrow[bend left=20]{rr}{u_2}
      & M_{f \times I} \dar{\xi_{f \times I}} \rar{\psi}
      &[2.5em] M_{f \times I} \arrow{dd}{\xi_{f \times I}} \\
      & \polySem{f \times I}(M_{f \times I}) \dar{\Sigma_{A \times I}K} & \\
      & \polySem{f \times I}(M_{f \times I} \times B)
      \rar{\polySem{f \times I}(\phi)}
      & \polySem{f \times I}(M_{f \times I})
    \end{tikzcd}
  \end{equation*}
  These diagrams shall indicate that $u_1$ is given as coinductive extensions
  and $\psi$ as one-step definition (which can be defined using coproducts)%see \appRef{proofs-construct-dt})
  , using that $M_{f \times I}$ is a final coalgebra.
  The maps involved in the diagram are given as follows, which we sometimes
  spell out in the internal language of $\cod$, see for
  example~\cite{AbbottThesis}, as this is sometimes more readable.
  \begin{itemize}
  \item $p : \Sigma_A \Pi_{f} \Rightarrow \Sigma_{A \times I} \Pi_{f \times I}$
    is the natural transformation that maps
    $(a, v)$ to $(a, t(a), v)$.
    It is given by the extension
    $\polySem{\alpha, \beta} : \polySem{f} \Rightarrow \polySem{f \times I}$
    of the morphism of polynomials~\cite{GambinoKockPolyFunctors}
    \begin{equation*}
      \begin{tikzcd}
        B \rar{f} \dar[swap]{\beta} \pullback & A \dar{\alpha} \\
        B \times I \rar{f \times I} & A \times I
      \end{tikzcd}
    \end{equation*}
    where $\alpha = \prodArr{\id, t}$ and $\beta = \prodArr{\id, t \circ f}$.
  \item The map
    $K : \Pi_{f \times I}(M_{f \times I})
    \to \Pi_{f \times I}(M_{f \times I} \times B)$ is given
    as transpose of
    $\prodArr{\varepsilon_{M_{f \times I}}, \pi_1 \circ \pi}
    : \reidx{(f \times I)}(\Pi_{f \times I}(M_{f \times I}))
    \to M_{f \times I} \times B$,
    where $\varepsilon$ is the counit of the product (evaluation) and
    $\pi$ is the context projection.
    In the internal language $K$ is given by
    $K \, v = \lambda (b,i). (v \, (b,i), b)$.
  \item $\phi : M_{f \times I} \times B \to M_{f \times I}$ is constructed as
    coinductive extension as in the following diagram
  \begin{equation*}
    \begin{tikzcd}[column sep=3.5em]
      M_{f \times I} \times B
      \dar{\xi_{f \times I} \times \id}
      \rar{\phi}
      & M_{f \times I} \arrow{dd}{\xi_{f \times I}} \\
      \polySem{f \times I}(M_{f \times I}) \times B \dar{e}
      & \\
      \polySem{f \times I}(M_{f \times I} \times B)
      \rar{\polySem{f \times I}(\phi)}
      & \polySem{f \times I}(M_{f \times I})
    \end{tikzcd}
  \end{equation*}
  Here $e$ is given by
  $e((a, i, v),b) = (a, s \, b, \lambda (b',s \, b). (v \, (b',i), b'))$.
  \end{itemize}

  The important property, which allows us to prove that
  $\xi_f : M_f \to \polySem{f}(M_f)$ restricts to
  $\xi' : V \to \polySem{P}(V)$ and that $\xi'$ is a final coalgebra, is
  that $x : V_i$ $\iff$
  $ \xi_f \, x = (a : A , v : \Pi_f M_f), \, t \, a = i \text{ and }
  (\forall b : B. f \, b = a \Rightarrow v \, b : V_{s \, b})$.
  The direction from left to right is given by simple a calculation,
  whereas the other direction can be proved by establishing a bisimulation and
  between $u_1 \, x$ and $u_2 \, x$.

  Hence $V$, given as a subobject of $M_f$, is
  indeed the final $\polySem{P}$-coalgebra in $\slice{\BCat}{I}$.
\end{proof}

Combining this with \cite[Prop. 4.1]{AbbottContainer}, we have that the
existence of final coalgebras for dependent polynomial functors follows
from the existence of initial algebras of (non-dependent) polynomial
functors.
This gives us the possibility of interpreting non-nested fixed points in
any Martin-Löf category as follows.

First, we observe that the equivalence
$\slice{\BCat}{I + J} \simeq \slice{\BCat}{I} \times \slice{\BCat}{J}$ allows us
to rewrite the functors from \iThmRef{iso-dialg-alg} to
a form that is closer to polynomial functors:
\begin{align*}
  \coprod_{u_1} \circ \, F_1 +_I \dotsm +_I \coprod_{u_n} \circ F_n
  & \cong \coprod_u F' \\
  \prod_{u_1} \circ \, F_1 \times_I \dotsm \times_I \prod_{u_n} \circ F_n
  & \cong \prod_u F',
\end{align*}
where
$J = J_1 + \dotsm + J_n$,
$u : J \to I$ is given by the cotupling
$\coprodArr{u_1, \dotsc, u_n}$ and $F' : \slice{\BCat}{I} \to \slice{\BCat}{J}$
is given by
$F' = \prodArr{F_1, \dotsc, F_n} : \slice{\BCat}{I}
\to \prod_{i =1}^n \slice{\BCat}{J_i} \simeq \slice{\BCat}{J}$.
Thus, if we establish that $F'$ is a polynomial functor, we get that
$\coprod_u F'$ and $\prod_u F'$ are polynomial functors,
see \cite{AbbottThesis}.
For non-nested fixed points, that is, $F_k$ is either
a constant functor, given by composition or reindexing, this is immediate,
as dependent polynomials can be composed and are closed
under constant functors and reindexing, see~\cite{GambinoKockPolyFunctors}.

We say that a dependent polynomial is \emph{parametric}, if it is of the
following form.
\begin{equation*}
  \begin{tikzcd}
    K + I & B \lar[swap]{s} \rar{f} & A \rar{t} & I
  \end{tikzcd}
\end{equation*}
Such polynomials represent polynomial functors
$\slice{\Cat{B}}{K} \times \slice{\Cat{B}}{I} \to \slice{\Cat{B}}{I}$ and allow
us speak about nested fixed points just as we have done in~\iSecRef{fib-dialg}.
What thus remains is that fixed points of parametric dependent polynomial
functors, in the sense of \iSecRef{fib-dialg}, are again dependent polynomial
functors.

The proof of this is literally the same as that for
containers~\cite[Sec.~5.3-5.5]{AbbottThesis}
or non-dependent polynomials~\cite{Gambino-WTypesPolyFunc}, except that we need
to check some extra conditions regarding the indexing.
\begin{theorem}
  Initial algebras and final coalgebras of parametric, dependent polynomial
  functors are again dependent polynomial functors.
\end{theorem}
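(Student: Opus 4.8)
I would follow the well-trodden route of constructing these fixed points as W-types and M-types ``with decorations'', exactly as for containers~\cite[Sec.~5.3--5.5]{AbbottThesis} and non-dependent polynomials~\cite{Gambino-WTypesPolyFunc}, and merely keep track of the extra index maps that the dependent setting introduces. Write the parametric polynomial as $P = \poly[K+I]{s}[B]{f}[A]{t}[I]$, representing a functor $\slice{\BCat}{K} \times \slice{\BCat}{I} \to \slice{\BCat}{I}$. The first step is to split it along $K + I$: using disjointness of coproducts in $\BCat$ one has $B \cong B_K + B_I$ with $s$ restricting to $s_K : B_K \to K$ and $s_I : B_I \to I$, and $f$ restricting to $f_K : B_K \to A$ and $f_I : B_I \to A$. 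Under this splitting the ``recursion part'' of $\polySem{P}$ is the dependent polynomial endofunctor $S = \polySem{\poly[I]{s_I}[B_I]{f_I}[A]{t}[I]}$ on $\slice{\BCat}{I}$, and the dependence on the $\slice{\BCat}{K}$-argument enters only through the $A$-indexed families $\prod_{f_K}\reidx{s_K}$.

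For the initial algebra I would take the shape object to be the initial algebra $W$ of $S$ (the well-founded trees with $A$-labelled nodes and $B_I$-indexed recursive branches), together with its structure map $t_W : W \to I$; this exists because $\BCat$ is a Martin-Löf category~\cite[Thm.~12]{Gambino-WTypesPolyFunc}. The new variable-position object $B' \to W$ then has as elements the pairs consisting of a tree $w$ and a $K$-ary position in $w$ --- a node of $w$ labelled by some $a \in A$ together with a $b \in B_K$ over $a$ --- and is built from $W$, $B_K$ and $f_K$ by the same finite-limit manipulation as in the non-dependent case; it carries $f' : B' \to W$ (the underlying tree) and $s' : B' \to K$, sending such a pair to $s_K(b)$. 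One then checks that $\polySem{\poly[K]{s'}[B']{f'}[W]{t_W}[I]}$ is the parametrised initial algebra of $\polySem{P}$: the verification of the universal property is verbatim Abbott's argument, and the only genuinely new points are that $t_W$ supplies the $I$-indexing of $W$ and that $s'$ is well defined, both of which reduce to stability of coproducts and pullbacks in a locally Cartesian closed category. Final coalgebras are dual: replace $W$ by the final coalgebra $M$ of $S$ (the non-well-founded trees), which exists by \iThmRef{dep-final-coalg-from-non-dep}, decorate coinductively to obtain $B'' \to M$ with $f'' : B'' \to M$, $s'' : B'' \to K$ and structure map $t_M : M \to I$, and observe that $\polySem{\poly[K]{s''}[B'']{f''}[M]{t_M}[I]}$ is naturally isomorphic to the parametrised final coalgebra of $\polySem{P}$.

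Together with the stratification of $\SPDT$ by the nesting of fixed points, this also covers nested fixed points. The main obstacle is the bookkeeping around the position objects $B'$ and $B''$ and their maps to $K$: in the non-dependent case a position is just a leaf, whereas here each position carries a $K$-index that must be computed compatibly through the entire tree, and the identification of the composite $\coprod_t \circ \prod_f \circ \reidx{s}$ along the new polynomial with the fixed-point functor depends on Beck--Chevalley-type coherences that have to be checked in the indexed setting. Once these are in place the remainder is routine.
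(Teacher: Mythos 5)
You follow essentially the same route as the paper: split the parametric polynomial into its parameter part and its recursion part, take as shape object the fixed point of the recursion-only endofunctor, decorate it with a position family over the shapes mapping to $K$, check compatibility of the two index maps, and defer the verification of the universal property to \cite[Sec.~5.3--5.5]{AbbottThesis}. For initial algebras this matches the paper's proof (your $W$, $t_W$, $B'$, $f'$, $s'$ are its $P$, $y$, $Q$, $h$, $x$), with one caveat: $B'$ is not obtained by a ``finite-limit manipulation'' but is itself an inductively defined family over $W$ --- in the paper it is the initial $\reidx{(\psi^{-1})} (B + \coprod_g \reidx{\varepsilon})$-algebra, and your $s'$ is the inductive extension the paper calls $x$; since you appeal to Abbott's argument verbatim, this is a wording issue rather than a gap.

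The final-coalgebra half, as you state it, does fail. The construction is \emph{not} simply dual, and ``decorate coinductively'' is the step that breaks: the positions of the resulting polynomial are the finite paths from the root of a non-well-founded tree to a parameter node, so the position family over $M$ must again be a \emph{least} fixed point. This is exactly the asymmetry the paper makes explicit: only the shape isomorphism $\psi$ switches from initial algebra to final coalgebra, while $\varphi$ (the positions) is an initial algebra in both cases, and the index map into the parameter object is an inductive extension in both cases. Taking the greatest fixed point adds ``positions at infinity'' and yields the wrong functor. Concretely, over $\SetC$ take $H(X,Y) = X + Y$, i.e.\ two shapes, one $K$-position over the first and one recursive position over the second. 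The shapes of the final coalgebra form $M \cong \EN$, and $\nu Y.(X+Y) \cong \N \times X + \T$. The inductively defined positions are a singleton over each finite shape and empty over the shape $\infty$, which gives exactly this functor; the coinductively defined positions also yield a singleton over $\infty$ (the infinite path), so the extension evaluated at $X = \emptyset$ is $\emptyset$ rather than the required one-element set $\nu Y.\,Y \cong \T$. With the positions over $M$ (and your map $s''$ into $K$) defined inductively, as in the paper, the rest of your argument goes through.
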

\begin{proof}
  Let
  \begin{equation*}
    \begin{tikzcd}[row sep=1pt]
      F = J & B \lar[swap]{s} \rar{f} & A \rar{t} & I \\
      G = I & D \lar[swap]{u} \rar{g} & C \rar{v} & I
    \end{tikzcd}
  \end{equation*}
  be dependent polynomials and $H(X, Y) = \polySem{F} \times_I \polySem{G}$ be
  the parametric dependent polynomial functor in question.
  Assuming that there is a polynomial
  \begin{equation*}
    \begin{tikzcd}
      J & Q \lar[swap]{x} \rar{h} & P \rar{y} & I
    \end{tikzcd}
  \end{equation*}
  so that for $K = \coprod_y \prod_h \reidx{x}$ we have
  $K(X) \cong H(X, K(X))$, we can calculate, as in~\cite{AbbottThesis},
  that we need to have isomorphisms
  \begin{align*}
    \psi : A \times_I \polySem{G}(P) \cong P \\
    \varphi : B + \coprod_g \reidx{\varepsilon}Q \cong \reidx{\psi}(Q)
  \end{align*}
  where $B + \coprod_g \reidx{\varepsilon}Q$ is, as in loc. cit., is an
  abbreviation for $B_a + \coprod_{d : D_c} Q (r \, d)$ in the context
  $(a, (c, r)) : A \times_I \polySem{G}(P)$.
  If $K(X)$ shall be an initial algebra, $\psi$ must an initial algebra as well,
  whereas if $K(X)$ shall be a final coalgebra, $\psi$ must be one.
  The isomorphism $\varphi$ is given as the initial
  $\reidx{(\psi^{-1})} (B + \coprod_g \reidx{\varepsilon})$-algebra in both
  cases, see~\cite{AbbottThesis}.
  This we use to define $x : Q \to J$ as the inductive extension of the map
  $\coprodArr{s, \pi_2} :
  \reidx{(\psi^{-1})}(B + \coprod_g \reidx{\varepsilon}J) \to J$.
  Given these definitions, the following diagrams commute.
  \begin{equation*}
    \begin{tikzcd}
      A \times_I \polySem{G}(P) \arrow{rr}{\psi} \arrow{dr}{}
      & & P \arrow{dl}{y} \\
      & I &
    \end{tikzcd}
    \qquad
    \begin{tikzcd}
      B + \coprod_g \reidx{\varepsilon}Q
      \arrow{rr}{\varphi} \arrow{dr}[swap]{[s, x \circ \pi_2]}
      & & \reidx{\psi} Q \arrow{dl}{} \\
      & J &
    \end{tikzcd}
  \end{equation*}
  This gives us that the isomorphism given in the proofs of
  \cite[Prop. 5.3.1, Prop. 5.4.2]{AbbottThesis} also work for the dependent
  polynomial case.
  The rest of the proofs in loc. cit. go then through, as well.
  Thus $K$ is in both cases again given by a dependent polynomial.
\end{proof}

% For initial algebras, this is an instance
% of~\cite[Thm. 4.5]{GambinoKockPolyFunctors}, where Gambino and Kock have proved
% that the free monad for a polynomial is again given by a polynomial.
% Unfortunately, the proof given in loc. cit. cannot be adapted easily, as the
% reindexing map of the dependent polynomial constructed for the fixed point is
% given by recursion, which we cannot do in the coinductive case.
% It remains open for now whether final coalgebras for parametric polynomials
% are polynomial functors themselves.
% It should be noted however that container are closed under taking final
% coalgebras, see~\cite[Prop. 5.4]{AbbottContainer}.

Summing up, we are left with the following result.
\begin{corollary}
  All data types for strictly positive signatures can be constructed in any
  Martin-Löf category.
  % Data types that only have dependent coinductive types at the top level,
  % that is, are of the form $\nu (\funCur{G_u}, \funCur{F})$ for some
  % $F : \slice{B}{K} \times \slice{B}{I} \to \slice{B}{J}$ as is given
  % in \iDefRef{dt-complete} but only using non-dependent coinductive
  % data types, can be constructed in any Martin-Löf category.
\end{corollary}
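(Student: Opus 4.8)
The plan is to prove, by induction on the (stratified) rules defining $\SPSig$ and $\SPDT$, a statement slightly stronger than the corollary: working in the codomain fibration $\cod : \ArrC{\BCat} \to \BCat$ over a Martin-Löf category $\BCat$, every $F \in \SPDT$ is a dependent polynomial functor --- a \emph{parametric} one when its domain carries a parameter factor $\Cat{C} = \prod_i \slice{\BCat}{K_i}$ --- and every signature $(F,u) \in \SPSig$ admits an IDT and a CDT whose carriers are again (parametric) dependent polynomial functors. Since $\BCat$ is locally Cartesian closed with disjoint coproducts and finite limits, $\cod$ is a DTCC, so all the ambient structure invoked below is available: fibrewise and dependent (co)products in every slice, comprehension, and the equivalence $\slice{\BCat}{I+J} \simeq \slice{\BCat}{I} \times \slice{\BCat}{J}$.

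The base cases of $\SPDT$ are immediate, since constant functors $K_A^{\Cat{P}_I}$, projections $\pi_k$ and reindexings $\reidx{f}$ are dependent polynomial functors; the composition and pairing rules are handled by the facts that dependent polynomials compose and that $\prodArr{F_1, F_2} : \slice{\BCat}{I} \to \slice{\BCat}{J_1} \times \slice{\BCat}{J_2} \simeq \slice{\BCat}{J_1 + J_2}$ of dependent polynomial functors is again one (these are exactly the closure properties of~\cite{GambinoKockPolyFunctors}). For a signature $(F,u) \in \SPSig$, \iThmRef{iso-dialg-alg} rewrites $\DialgC{F,G_u} \cong \AlgC{\coprod_u F'}$ and $\DialgC{G_u,F} \cong \CoalgC{\prod_u F'}$, with $F' = \prodArr{F_1, \dotsc, F_n}$ and $u = \coprodArr{u_1, \dotsc, u_n} : J_1 + \dotsm + J_n \to I$; by the induction hypothesis $F'$ is a (parametric) dependent polynomial functor, hence so are $\coprod_u F'$ and $\prod_u F'$. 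The required initial algebra exists because $\BCat$ has W-types (the construction as in~\cite{AbbottThesis}), and the required final coalgebra exists by \iThmRef{dep-final-coalg-from-non-dep} combined with~\cite[Prop.~4.1]{AbbottContainer}, which builds M-types from W-types; the closure theorem immediately preceding the corollary then shows these fixed points are again dependent polynomial functors, closing the induction. Composing back through the isomorphisms of \iThmRef{iso-dialg-alg} yields the IDT $\mu(\funCur{F},\funCur{G_u})$ and CDT $\nu(\funCur{G_u},\funCur{F})$ themselves.

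The step I expect to be the main obstacle --- and the reason the statement should be read with the restriction flagged just before \iThmRef{iso-dialg-alg} --- is the $\nu$-rule under nesting. Reducing a $(G_u,F)$-dialgebra to a coalgebra via \iThmRef{iso-dialg-alg} requires dependent products in the fibre, which the codomain fibration supplies, but the final coalgebra produced by \iThmRef{dep-final-coalg-from-non-dep} appears as a subobject of an M-type, and its compatibility with the parametric-polynomial bookkeeping of~\cite{AbbottThesis} used in the closure theorem has so far only been verified when a dependent coinductive type occurs at the top level; nested dependent \emph{inductive} types and nested \emph{non-dependent} coinductive types cause no difficulty, since there the arguments of~\cite{AbbottThesis,Gambino-WTypesPolyFunc} transfer verbatim. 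A secondary bookkeeping point is that the mutual induction on $\SPSig$/$\SPDT$ must be organised by nesting depth of fixed points, as already noted when the class was defined, so that ``is a dependent polynomial functor'' is genuinely an available hypothesis for the sub-functors occurring in each $\mu$- and $\nu$-rule. Granting these two points, the corollary is simply the assembly of \iThmRef{iso-dialg-alg}, \iThmRef{dep-final-coalg-from-non-dep}, \cite[Prop.~4.1]{AbbottContainer} and the closure theorem, together with the fact that a Martin-Löf category provides the W-types from which all the invoked fixed points are built.
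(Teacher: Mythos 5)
Your proposal is correct and follows essentially the route the paper intends: the corollary is just the assembly, by induction over the strictly positive signatures, of the dialgebra-to-(co)algebra reduction, the construction of final coalgebras of dependent polynomial functors from M-types (hence from the W-types of a Martin-Löf category), and the closure theorem showing fixed points of parametric dependent polynomial functors are again dependent polynomial functors. You even flag the same top-level restriction on dependent coinductive types that the paper states at the start of the section, so nothing essential is missing.
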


Let us see, by means of an example, how the construction in the proof of
\iThmRef{dep-final-coalg-from-non-dep} works intuitively.
\begin{example}
  Recall from \iExRef{pstr} that partial streams are given by the declaration
  \begin{lstlisting}[language=Agda,mathescape=true,columns=flexible]
codata PStr (A : Set) : $\EN$ $\to$ Set where
  hd : (n : $\EN$) $\to$ PStr ($s_\infty$ n) $\to$ A
  tl : (n : $\EN$) $\to$ PStr ($s_\infty$ n) $\to$ PStr n
  \end{lstlisting}
  By \iThmRef{iso-dialg-alg}, we can construct $\PStr{}$ as the final
  coalgebra of
  $F : \slice{\BCat}{\T} \times \slice{\BCat}{\EN} \to \slice{\BCat}{\EN}$
  with $F(A, X) = \prod_{s_\infty} \reidx{!} A \times \prod_{s_\infty} X$.
  Note that $F$ is isomorphic to
  $\slice{\BCat}{\T} \times \slice{\BCat}{\EN}
  \simeq \slice{\BCat}{\T + \EN} \xrightarrow{\polySem{P}} \slice{\BCat}{\EN}$,
  where $P$ is the polynomial
  \begin{equation*}
    P = \T + \EN
    \xleftarrow{g} 2 \times \EN
    \xrightarrow{f} \EN
    \xrightarrow{\id} \EN
    \qquad g(i, k) =
    \begin{cases}
      \kappa_1 \ast, & i = 1 \\
      \kappa_2 k, & i = 2
    \end{cases}
    \qquad f(i, k) = s_\infty \, k.
  \end{equation*}
  If we now fix an object $A \in \slice{\BCat}{\T}$, then
  $F(A, -) \cong \polySem{P'}$ for the polynomial $P'$ given by
  \begin{equation*}
    P' = \EN
    \xleftarrow{\pi} \sum_\EN \sum_{s_\infty} \prod_{s_\infty} \reidx{!} A
    \xrightarrow{f'} \sum_\EN \prod_{s_\infty} \reidx{!} A
    \xrightarrow{\pi} \EN,
  \end{equation*}
  where $\pi$ is the projection on the index of a dependent sum
  and $f'(n, (s_\infty \, n, v)) = (s_\infty \, n, v)$.

  Recall that we construct in \iThmRef{dep-final-coalg-from-non-dep} the final
  coalgebra of $\polySem{P'}$ as a subobject of $M_{f'}$.
  Below, we present three trees that are elements of $M_{f'}$, where only
  the second and third are actually selected by the equaliser taken in
  \iThmRef{dep-final-coalg-from-non-dep}.
  \forestset{
    tree edge label/.style 2 args={
      edge label={node[midway, font=\scriptsize, #1]{#2}}, %
    },
    ntree edge label/.style n args={3}{
      edge label={node (#2) [midway, font=\scriptsize, #1] {#3}},
    },
  }
  \begin{center}
    \begin{forest}
      for tree={l sep=20pt},
      [{$(3, a_0)$}, name={t1-1}
      [{$(\infty, a_1)$}, ntree edge label={left}{t1-1c}{$(2, 3, a_0)$}
      [{$(15, a_2)$}, tree edge label={left}{$(\infty, \infty, a_1)$}
      [{$\vdots$}, tree edge label={left}{$(14, 15, a_2)$}
      ] ] ] ]
      \draw[|->] ($(t1-1c.north west) + (0.1,0)$)
      .. controls +(-90:-0.4) and +(0:-0.6)
      .. node[left] {$f'$} (t1-1.west);
    \end{forest}
    \hspace{2cm}
    \begin{forest}
      for tree={l sep=20pt}
      [{$(3, b_0)$}, name={t2-1}
      [{$(2, b_1)$}, name={t2-2}, ntree edge label={left}{t2-1br}{$(2, 3, b_0)$}
      [{$(1, b_2)$}, name={t2-3}, ntree edge label={left}{t2-2br}{$(1, 2, b_1)$}
      [{$(0, \bot)$}, name={t2-4}, ntree edge label={left}{t2-3br}{$(0, 1, b_2)$}
      ] ] ] ]
      \node (t2-idx) at ($(t2-1) + (-1.5, 0.6)$) {$3$};
      \draw[|->] (t2-1) -- node[above] {$\pi$} (t2-idx);
      \node (t2-2-idx) at ($(t2-1br) + (-1.8,-0.6)$) {$2$};
      \draw[|->] (t2-1br) -- node[above] {$\pi$} (t2-2-idx);
      \draw[|->] (t2-2) -- node[below] {$\pi$} (t2-2-idx);
      \node (t2-3-idx) at ($(t2-2br) + (-1.8,-0.6)$) {$1$};
      \draw[|->] (t2-2br) -- (t2-3-idx);
      \draw[|->] (t2-3) -- (t2-3-idx);
      \node (t2-4-idx) at ($(t2-3br) + (-1.8,-0.6)$) {$0$};
      \draw[|->] (t2-3br) -- (t2-4-idx);
      \draw[|->] (t2-4) -- (t2-4-idx);
    \end{forest}
    \hspace{2cm}
    \begin{forest}
      for tree={l sep=20pt}
      [{$(\infty, c_0)$}
      [{$(\infty, c_1)$}, tree edge label={left}{$(\infty, \infty, c_0)$}
      [{$(\infty, c_2)$}, tree edge label={left}{$(\infty, \infty, c_1)$}
      [{$\vdots$}, tree edge label={left}{$(\infty, \infty, c_2)$}
      ] ] ] ]
    \end{forest}
  \end{center}
  Here we denote a pair $(k, v) : \sum_\EN \prod_{s_\infty} \reidx{!} A$ with
  $k = s_\infty \, n$ and $v \, n = a$ by $(k, a)$,
  or if $k = 0$ by $(0, \bot)$.
  Moreover, we indicate the matching of indices in the second tree, which is
  used to form the equaliser.
  Note that the second tree is an element of $\PStr{A} \, 3$, whereas the
  third is in $\PStr{A} \, \infty$.
  \qed
\end{example}

% We give a syntax that corresponds, without proof, to the data types definable in
% a $DTCC$.
% We denote by $\Gamma$ a sequence $A_1, \dotsc, A_m$ (a context)
% with $A_1 \in \Cat{P}_\T$ and $A_{i+1} \in \Cat{P}_{\compr{A_i}}$ and,
% analogously $\Gamma_k = B^k_1, \dotsc, B^k_{m_k}$.
% Let $(F, G)$ be a signature with
% $F : \prod_{i=1}^m \Cat{P}_{K_i} \times \Cat{P}_{\compr{A_m}} \to \prod_{i=1}^n \Cat{P}_{\compr*{B^k_{m_k}}}$
% and $G = \prodArr{\reidx{f_1}, \dotsc, \reidx{f_n}}$ and
% $f_k : J_k \to \compr{A_m}$.
% The coinductive data type defined by $\nu (G, F)$ is then given as in the
% following pseudo code.
% \begin{center}
% \begin{lstlisting}[language=Agda,mathescape=true,columns=flexible]
% codata T ($X_1$ : $K_1 \to $ Set, $\dotsc$, $X_m : K_m \to $ Set) : $\Gamma$ $\to$ Set where
%   $\xi_1$ : $\Gamma_1 \to \reidx{f_1}T \to F_1(X_1, \dotsc, X_m, T)$
%   $\dots$
%   $\xi_n$ : $\Gamma_n \to \reidx{f_n}T \to F_n(X_1, \dotsc, X_m, T)$
% \end{lstlisting}
% \end{center}
% \begin{minipage}[h]{0.45\linewidth}
%   \begin{lstlisting}[language=Agda,mathescape=true,columns=flexible]
% data T (A : Set) : $B_1 \to \dotsm \to B_d$ $\to$ Set where
%   $\alpha_1$ : $\Gamma_1 \to F_1(A_1, \dotsc, A_m, T(A_1, \dotsc, A_m)) \to \reidx{f_1}A$
%   $\dots$
%   $\alpha_n$ : $\Gamma_n \to F_n(A) \to \reidx{f_n}A$
% \end{lstlisting}
% \end{minipage}
% \hspace{0.5cm}
%\begin{minipage}[h]{0.4\linewidth}
%\end{minipage}

% Inductive types are given analogous.
% \todo{Briefly say something about use of variables}

%%% Local Variables:
%%% TeX-master: "../FibDialg"
%%% ispell-local-dictionary: "british"
%%% mode: latex
%%% End:
% \input{content/dt-bc-cond}
%\input{content/co-induction}
\section{Conclusion and Future Work}
\label{sec:concl}

We have seen how dependent inductive and coinductive types with type
constructors, in the style of Agda, can be given semantics in terms of data
type closed categories (DTCC), with the restriction that destructors of
coinductive types are not allowed to refer to each other.
This situation is summed up in the following table.
\begin{center}
  \begin{tabular}[t]{l|p{0.65\textwidth}}
    Condition & Use/Implications \\ \hline
    Cloven fibration & Definition of signatures and data types \\
    Data type completeness
      & Construction of types indexed by objects in base (e.g., vectors for
      $\N \in \BCat$) and types agnostic of indices (e.g., initial and final
      objects, sums and products) \\
    Data type closedness
      & Constructed types as index; Full interpretation of data types % \\
    % Beck-Chevalley condition
    %   & Preservation of type structure  under reindexing \\
    % Full comprehension
    %   & Lifting of functors for induction/dependent recursion \\
    % Strong sum & Induction principle
  \end{tabular}
\end{center}
Moreover, we have shown that a large part of these data types can be constructed
as fixed points of polynomial functors.

Let us finish by discussing directions for future work.
% First, the question of whether all data types can be constructed through
% polynomials remains open, which is, however, likely to hold.
First, a full interpretation of syntactic data types has also still to be
carried out.
Here one has to be careful with type equality, which is usually dealt with
using split fibrations and a Beck-Chevalley condition.
The latter can be defined generally for the data types of this work, in needs
to be checked, however, whether this condition is sufficient for giving
a sound interpretation.
Finally, the idea of using dialgebras has found its way into the syntax
of higher inductive types~\cite{Capriotti-MutualHIT}, though in that work the
used format of dialgebras is likely to be too liberal to guarantee the
existence of semantics.
The reason is that the shape of dialgebras used in the present work ensures
that we can construct data types from (co)coalgebras, whereas this
is not the case in~\cite{Capriotti-MutualHIT}.
Thus it is to be investigated what the right notion of dialgebras is for
capturing higher (co)inductive types, such that their semantics in terms of
trees can always be constructed.

%%% Local Variables:
%%% TeX-master: "../FibDialg-FICS"
%%% ispell-local-dictionary: "british"
%%% mode: latex
%%% End:

\bibliographystyle{eptcs}
\bibliography{FibDialg}

\begin{thebibliography}{10}
\providecommand{\bibitemdeclare}[2]{}
\providecommand{\surnamestart}{}
\providecommand{\surnameend}{}
\providecommand{\urlprefix}{Available at }
\providecommand{\url}[1]{\texttt{#1}}
\providecommand{\href}[2]{\texttt{#2}}
\providecommand{\urlalt}[2]{\href{#1}{#2}}
\providecommand{\doi}[1]{doi:\urlalt{http://dx.doi.org/#1}{#1}}
\providecommand{\bibinfo}[2]{#2}

\bibitemdeclare{phdthesis}{AbbottThesis}
\bibitem{AbbottThesis}
\bibinfo{author}{Michael \surnamestart Abbott\surnameend}
  (\bibinfo{year}{2003}): \emph{\bibinfo{title}{Categories of Containers}}.
\newblock Ph.D. thesis, \bibinfo{school}{Leicester}.

\bibitemdeclare{article}{AbbottContainer}
\bibitem{AbbottContainer}
\bibinfo{author}{Michael \surnamestart Abbott\surnameend},
  \bibinfo{author}{Thorsten \surnamestart Altenkirch\surnameend} \&
  \bibinfo{author}{Neil \surnamestart Ghani\surnameend} (\bibinfo{year}{2005}):
  \emph{\bibinfo{title}{Containers: Constructing strictly positive types}}.
\newblock {\sl \bibinfo{journal}{Theoretical Computer Science}}
  \bibinfo{volume}{342}(\bibinfo{number}{1}), pp. \bibinfo{pages}{3--27},
  \doi{10.1016/j.tcs.2005.06.002}.

\bibitemdeclare{inproceedings}{Altenkirch-IndexedCont}
\bibitem{Altenkirch-IndexedCont}
\bibinfo{author}{Thorsten \surnamestart Altenkirch\surnameend} \&
  \bibinfo{author}{Peter \surnamestart Morris\surnameend}
  (\bibinfo{year}{2009}): \emph{\bibinfo{title}{Indexed containers}}.
\newblock In: {\sl \bibinfo{booktitle}{Logic In Computer Science, 2009.
  {LICS}'09. 24th Annual {IEEE} Symposium on}}, \bibinfo{publisher}{{IEEE}},
  pp. \bibinfo{pages}{277--285}, \doi{10.1109/LICS.2009.33}.

\bibitemdeclare{incollection}{Altenkirch-IndInd}
\bibitem{Altenkirch-IndInd}
\bibinfo{author}{Thorsten \surnamestart Altenkirch\surnameend},
  \bibinfo{author}{Peter \surnamestart Morris\surnameend},
  \bibinfo{author}{Fredrik~Nordvall \surnamestart Forsberg\surnameend} \&
  \bibinfo{author}{Anton \surnamestart Setzer\surnameend}
  (\bibinfo{year}{2011}): \emph{\bibinfo{title}{A Categorical Semantics for
  Inductive-Inductive Definitions}}.
\newblock In \bibinfo{editor}{Andrea \surnamestart Corradini\surnameend},
  \bibinfo{editor}{Bartek \surnamestart Klin\surnameend} \&
  \bibinfo{editor}{Corina \surnamestart C{\^\i}rstea\surnameend}, editors: {\sl
  \bibinfo{booktitle}{Algebra and Coalgebra in Computer Science}}, {\sl
  \bibinfo{series}{Lecture Notes in Computer Science}} \bibinfo{volume}{6859},
  \bibinfo{publisher}{Springer Berlin Heidelberg}, pp. \bibinfo{pages}{70--84},
  \doi{10.1007/978-3-642-22944-2\_6}.

\bibitemdeclare{article}{vandenberg_nonwellfounded_2007}
\bibitem{vandenberg_nonwellfounded_2007}
\bibinfo{author}{Benno \surnamestart van~den Berg\surnameend} \&
  \bibinfo{author}{Federico \surnamestart De~Marchi\surnameend}
  (\bibinfo{year}{2007}): \emph{\bibinfo{title}{Non-well-founded trees in
  categories}}.
\newblock {\sl \bibinfo{journal}{Annals of Pure and Applied Logic}}
  \bibinfo{volume}{146}(\bibinfo{number}{1}), pp. \bibinfo{pages}{40--59},
  \doi{10.1016/j.apal.2006.12.001}.

\bibitemdeclare{article}{vdBerg-Non-wellfoundedTrees}
\bibitem{vdBerg-Non-wellfoundedTrees}
\bibinfo{author}{Benno \surnamestart van~den Berg\surnameend} \&
  \bibinfo{author}{Federico \surnamestart de~Marchi\surnameend}
  (\bibinfo{year}{2004}): \emph{\bibinfo{title}{Non-well-founded trees in
  categories}}.
\newblock {\sl \bibinfo{journal}{{arXiv}:math/0409158}}.
\newblock \urlprefix\url{http://arxiv.org/abs/math/0409158}.

\bibitemdeclare{misc}{Capriotti-MutualHIT}
\bibitem{Capriotti-MutualHIT}
\bibinfo{author}{Paolo \surnamestart Capriotti\surnameend}
  (\bibinfo{year}{2014}): \emph{\bibinfo{title}{Mutual and Higher Inductive
  Types in Homotopy Type Theory}}.
\newblock \urlprefix\url{http://www.cs.nott.ac.uk/~pvc/away-day-2014/mhit.pdf}.

\bibitemdeclare{inproceedings}{Chapman:GentleArtLevitation}
\bibitem{Chapman:GentleArtLevitation}
\bibinfo{author}{James \surnamestart Chapman\surnameend},
  \bibinfo{author}{Pierre-{\'E}variste \surnamestart Dagand\surnameend},
  \bibinfo{author}{Conor \surnamestart {McBride}\surnameend} \&
  \bibinfo{author}{Peter \surnamestart Morris\surnameend}
  (\bibinfo{year}{2010}): \emph{\bibinfo{title}{The Gentle Art of Levitation}}.
\newblock In: {\sl \bibinfo{booktitle}{Proceedings of the 15th {ACM} {SIGPLAN}
  International Conference on Functional Programming}}, \bibinfo{series}{{ICFP}
  '10}, \bibinfo{publisher}{{ACM}}, \bibinfo{address}{New York, {NY}, {USA}},
  pp. \bibinfo{pages}{3--14}, \doi{10.1145/1863543.1863547}.

\bibitemdeclare{inproceedings}{Dagand-McBride:Ornaments}
\bibitem{Dagand-McBride:Ornaments}
\bibinfo{author}{P.-E. \surnamestart Dagand\surnameend} \&
  \bibinfo{author}{C.~\surnamestart {McBride}\surnameend}
  (\bibinfo{year}{2013}): \emph{\bibinfo{title}{A Categorical Treatment of
  Ornaments}}.
\newblock In: {\sl \bibinfo{booktitle}{2013 28th Annual {IEEE}/{ACM} Symposium
  on Logic in Computer Science ({LICS})}}, pp. \bibinfo{pages}{530--539},
  \doi{10.1109/LICS.2013.60}.

\bibitemdeclare{incollection}{Ghani-IndexedInduct}
\bibitem{Ghani-IndexedInduct}
\bibinfo{author}{Cl{\'e}ment \surnamestart Fumex\surnameend},
  \bibinfo{author}{Neil \surnamestart Ghani\surnameend} \&
  \bibinfo{author}{Patricia \surnamestart Johann\surnameend}
  (\bibinfo{year}{2011}): \emph{\bibinfo{title}{Indexed Induction and
  Coinduction, Fibrationally}}.
\newblock In \bibinfo{editor}{Andrea \surnamestart Corradini\surnameend},
  \bibinfo{editor}{Bartek \surnamestart Klin\surnameend} \&
  \bibinfo{editor}{Corina \surnamestart C{\^\i}rstea\surnameend}, editors: {\sl
  \bibinfo{booktitle}{Algebra and Coalgebra in Computer Science}}, {\sl
  \bibinfo{series}{{LNCS}}} \bibinfo{volume}{6859},
  \bibinfo{publisher}{Springer}, pp. \bibinfo{pages}{176--191},
  \doi{10.1007/978-3-642-22944-2\_13}.

\bibitemdeclare{inproceedings}{Gambino-WTypesPolyFunc}
\bibitem{Gambino-WTypesPolyFunc}
\bibinfo{author}{Nicola \surnamestart Gambino\surnameend} \&
  \bibinfo{author}{Martin \surnamestart Hyland\surnameend}
  (\bibinfo{year}{2004}): \emph{\bibinfo{title}{Wellfounded Trees and Dependent
  Polynomial Functors}}.
\newblock In: {\sl \bibinfo{booktitle}{Types for Proofs and Programs}}, {\sl
  \bibinfo{series}{{LNCS}}} \bibinfo{volume}{3085},
  \bibinfo{publisher}{Springer}, pp. \bibinfo{pages}{210--225},
  \doi{10.1007/978-3-540-24849-1\_14}.

\bibitemdeclare{article}{GambinoKockPolyFunctors}
\bibitem{GambinoKockPolyFunctors}
\bibinfo{author}{Nicola \surnamestart Gambino\surnameend} \&
  \bibinfo{author}{Joachim \surnamestart Kock\surnameend}
  (\bibinfo{year}{2013}): \emph{\bibinfo{title}{Polynomial functors and
  polynomial monads}}.
\newblock {\sl \bibinfo{journal}{Math. Proc. Camb. Philos. Soc.}}
  \bibinfo{volume}{154}(\bibinfo{number}{01}), pp. \bibinfo{pages}{153--192},
  \doi{10.1017/S0305004112000394}.
\newblock \urlprefix\url{http://arxiv.org/abs/0906.4931}.

\bibitemdeclare{inproceedings}{Hagino-Dialg}
\bibitem{Hagino-Dialg}
\bibinfo{author}{Tatsuya \surnamestart Hagino\surnameend}
  (\bibinfo{year}{1987}): \emph{\bibinfo{title}{A typed lambda calculus with
  categorical type constructors}}.
\newblock In: {\sl \bibinfo{booktitle}{Category Theory in Computer Science}},
  pp. \bibinfo{pages}{140--157}.

\bibitemdeclare{inproceedings}{HamanaFiore-GADT}
\bibitem{HamanaFiore-GADT}
\bibinfo{author}{Makoto \surnamestart Hamana\surnameend} \&
  \bibinfo{author}{Marcelo \surnamestart Fiore\surnameend}
  (\bibinfo{year}{2011}): \emph{\bibinfo{title}{A Foundation for {GADTs} and
  Inductive Families: Dependent Polynomial Functor Approach}}.
\newblock In: {\sl \bibinfo{booktitle}{Proceedings of the Seventh Workshop on
  Generic Programming}}, \bibinfo{series}{{WGP} '11},
  \bibinfo{publisher}{{ACM}}, \bibinfo{address}{New York, {NY}, {USA}}, pp.
  \bibinfo{pages}{59--70}, \doi{10.1145/2036918.2036927}.

\bibitemdeclare{article}{HermidaJacobs-IndFib}
\bibitem{HermidaJacobs-IndFib}
\bibinfo{author}{Claudio \surnamestart Hermida\surnameend} \&
  \bibinfo{author}{Bart \surnamestart Jacobs\surnameend}
  (\bibinfo{year}{1997}): \emph{\bibinfo{title}{Structural Induction and
  Coinduction in a Fibrational Setting}}.
\newblock {\sl \bibinfo{journal}{Inf. Comput.}} \bibinfo{volume}{145}, pp.
  \bibinfo{pages}{107--152}, \doi{10.1006/inco.1998.2725}.

\bibitemdeclare{inproceedings}{Hofmann-LCCC-Types}
\bibitem{Hofmann-LCCC-Types}
\bibinfo{author}{Martin \surnamestart Hofmann\surnameend}
  (\bibinfo{year}{1995}): \emph{\bibinfo{title}{On the Interpretation of Type
  Theory in Locally Cartesian Closed Categories}}.
\newblock In: {\sl \bibinfo{booktitle}{Proceedings of Computer Science Logic,
  8th Workshop, CSL'94, Selected Papers}}, {\sl \bibinfo{series}{{LNCS}}}
  \bibinfo{volume}{933}, \bibinfo{publisher}{Springer}, pp.
  \bibinfo{pages}{427--441}, \doi{10.1007/BFb0022273}.

\bibitemdeclare{book}{Jacobs1999-CLTT}
\bibitem{Jacobs1999-CLTT}
\bibinfo{author}{B.~\surnamestart Jacobs\surnameend} (\bibinfo{year}{1999}):
  \emph{\bibinfo{title}{Categorical Logic and Type Theory}}.
\newblock {\sl \bibinfo{series}{Studies in Logic and the Foundations of
  Mathematics}} \bibinfo{volume}{141}, \bibinfo{publisher}{North Holland},
  \bibinfo{address}{Amsterdam}.

\bibitemdeclare{article}{Jiho2010}
\bibitem{Jiho2010}
\bibinfo{author}{Jiho \surnamestart Kim\surnameend} (\bibinfo{year}{2010}):
  \emph{\bibinfo{title}{Higher-order Algebras and Coalgebras from Parameterized
  Endofunctors}}.
\newblock {\sl \bibinfo{journal}{Electronic Notes in Theoretical Computer
  Science}} \bibinfo{volume}{264}(\bibinfo{number}{2}), pp.
  \bibinfo{pages}{141--154}, \doi{10.1016/j.entcs.2010.07.018}.

\bibitemdeclare{inproceedings}{Loh-Magalhaes:IndexedFunctors}
\bibitem{Loh-Magalhaes:IndexedFunctors}
\bibinfo{author}{Andres \surnamestart L{\"o}h\surnameend} \&
  \bibinfo{author}{Jos{\'e}~Pedro \surnamestart Magalh{\~a}es\surnameend}
  (\bibinfo{year}{2011}): \emph{\bibinfo{title}{Generic programming with
  indexed functors}}.
\newblock In: {\sl \bibinfo{booktitle}{Proceedings of the seventh {ACM}
  {SIGPLAN} workshop on Generic programming}}, \bibinfo{publisher}{{ACM}}, pp.
  \bibinfo{pages}{1--12}.
\newblock \urlprefix\url{http://dl.acm.org/citation.cfm?id=2036920}.

\bibitemdeclare{article}{Moerdijk-WellFoundedTrees}
\bibitem{Moerdijk-WellFoundedTrees}
\bibinfo{author}{Ieke \surnamestart Moerdijk\surnameend} \&
  \bibinfo{author}{Erik \surnamestart Palmgren\surnameend}
  (\bibinfo{year}{2000}): \emph{\bibinfo{title}{Wellfounded trees in
  categories}}.
\newblock {\sl \bibinfo{journal}{Annals of Pure and Applied Logic}}
  \bibinfo{volume}{104}(\bibinfo{number}{1{\textendash}3}), pp.
  \bibinfo{pages}{189--218}, \doi{10.1016/S0168-0072(00)00012-9}.

\bibitemdeclare{article}{Seely-LCCC-Types}
\bibitem{Seely-LCCC-Types}
\bibinfo{author}{R.~A.~G. \surnamestart Seely\surnameend}
  (\bibinfo{year}{1984}): \emph{\bibinfo{title}{Locally cartesian closed
  categories and type theory}}.
\newblock {\sl \bibinfo{journal}{Math. Proc. Camb. Philos. Soc.}}
  \bibinfo{volume}{95}(\bibinfo{number}{01}), pp. \bibinfo{pages}{33--48},
  \doi{10.1017/S0305004100061284}.

\end{thebibliography}

% \clearpage
% \appendix
% \input{content/appendix}

\end{document}